\definecolor{myurlcolor}{rgb}{0,0,0.4}
\definecolor{mycitecolor}{rgb}{0,0.5,0}
\definecolor{myrefcolor}{rgb}{0.5,0,0}
\def\be{\begin{equation}}
\def\ee{\end{equation}}
\def\ben{\begin{eqnarray}}
\def\een{\end{eqnarray}}
\def\eea{\end{array}}
\def\bea{\begin{array}}
\newcommand{\Tr}[1]{\mathrm{Tr}#1}
\newcommand{\bei}{\begin{itemize}}
\newcommand{\eei}{\end{itemize}}
\newcommand{\ket}[1]{|#1\rangle}
\newcommand{\bra}[1]{\langle#1|}
\newcommand{\proj}[1]{\ket{#1}\!\bra{#1}}
\newcommand{\I}{\mathbbm{1}}
\newcommand{\dl}[1]{\left|\!\left|#1\right|\!\right|}
\renewcommand{\emph}[1]{\textbf{#1}}
\newtheorem*{rep@theorem}{\rep@title}
\newcommand{\newreptheorem}[2]{%
\newenvironment{rep#1}[1]{%
 \def\rep@title{#2 \ref{##1}}%
 \begin{rep@theorem}}%
 {\end{rep@theorem}}}
\theoremstyle{plain}
\newtheorem{thm}{Theorem}
\newtheorem*{thm*}{Theorem}
\newtheorem{fakt}{Fact}
\newtheorem{cor}[thm]{Corollary}
\theoremstyle{definition}
\theoremstyle{remark}
\begin{document}


\title{A universal scheme to self-test any quantum state or measurement}
\author{Shubhayan Sarkar}
\email{shubhayan.sarkar@ug.edu.pl}
\affiliation{Laboratoire d’Information Quantique, Université libre de Bruxelles (ULB), Av. F. D. Roosevelt 50, 1050 Bruxelles, Belgium}
\affiliation{Institute of Informatics, Faculty of Mathematics, Physics and Informatics,
University of Gdansk, Wita Stwosza 57, 80-308 Gdansk, Poland}

\author{Alexandre C. Orthey, Jr.}
\affiliation{Institute of Fundamental Technological Research, Polish Academy of Sciences, Pawi\'nskiego 5B, 02-106 Warsaw, Poland.}
\affiliation{Faculty of Mathematics, Informatics and Mechanics, University of Warsaw, ul. Banacha 2, 02-097 Warsaw, Poland.}
\affiliation{Center for Theoretical Physics, Polish Academy of Sciences, Aleja Lotnik\'{o}w 32/46, 02-668 Warsaw, Poland}

\author{Remigiusz Augusiak}
\affiliation{Center for Theoretical Physics, Polish Academy of Sciences, Aleja Lotnik\'{o}w 32/46, 02-668 Warsaw, Poland}

\begin{abstract}	

The emergence of quantum devices has raised a significant issue: how to certify the quantum properties of a device without placing trust in it. To characterise quantum states and measurements in a device-independent way, up to some degree of freedom, we can make use of a technique known as self-testing. While schemes have been proposed to self-test all pure multipartite entangled states (up to complex conjugation) and real local projective measurements, little has been done to certify mixed entangled states, composite or non-projective measurements. By employing the framework of quantum networks, we propose a scheme for self-testing (up to complex conjugation) arbitrary extremal 
measurements, including the projective ones, but also, in an indirect way, any quantum state, including the mixed ones and any quantum measurement, including non-extremal ones. The quantum network considered in this work is the simple star network, which is implementable using current technologies. For our purposes, we also construct a scheme that can be used to self-test the two-dimensional tomographically complete set of measurements with an arbitrary number of parties.
\end{abstract}


\maketitle

\textit{Introduction.}---Demonstrating the superiority of quantum systems over classical ones in tasks like computation and cryptography has been one of the most important advancements in quantum information theory. Quantum protocols often offer greater efficiency and security than their classical counterparts. However, to implement these protocols, it is necessary to certify the \textit{quantumness} of devices involved. The traditional methods of certifying a quantum state or a measurement rely on trusting the devices used, therefore referred to as device-dependent schemes. For instance, any quantum state can be device-dependently verified through, e.g., the quantum tomography if one has complete knowledge of the measurement apparatuses. However, from a practical perspective, it is unreasonable to completely trust the device. Therefore, it is desirable to devise certification schemes that require less or even minimal assumptions about the devices used such as for instance the device-independent ones.
%
%
These schemes enable the verification of non-trivial quantum properties of an unknown device solely through the statistical data it produces. The essential resource underlying any DI scheme is Bell nonlocality---the existence of quantum correlations that cannot be reproduced by any local (or simply classical) hidden-variable model \cite{Bell,Bell66,NonlocalityReview}. In particular, observing a violation of a Bell inequality certifies, in a fully DI manner, that the device operates on an entangled state.

Interestingly, the observation of Bell nonlocality in the data produced by a given device can provide a lot more information about its internal working than just the presence of entanglement. The strongest and most complete form of such DI certification is self-testing \cite{Mayers_selftesting, Yao}. It enables almost full up to certain well understood equivalences characterization of the underlying quantum state and measurements performed on it. 
%
%
In recent years, a plethora of schemes have been proposed to self-test pure quantum states or quantum measurements (see, e.g., Refs. \cite{Scarani,Reichardt_nature,Mckague_2014,Wu_2014,Bamps,All,chainedBell, Projection,Jed1,prakash,Armin1,sarkar,sarkaro2,Allst,sarkarPRL}). In particular, Refs. \cite{Projection} and \cite{Allst} propose self-testing strategies for any pure entangled bipartite or multipartite states, respectively, where the second method is based on the quantum networks scenario. The only scheme to self-test a mixed entangled state, in particular, a bound entangled state, was only recently proposed in Ref. \cite{sarkarPRL} (see nevertheless Refs. \cite{subspaces1,subspaces2}). As for quantum measurements, apart from a bunch of schemes allowing to certify various examples of quantum measurements \cite{Armin1, Jed1, random1, sarkar, chen1}, including the composite ones \cite{Marco, JW2, NLWEsupic, sarkarPRL}, a general method to certify any real local projective measurement has only very recently been designed in Ref. \cite{chen1}. Despite this progress, there exists no single unified scheme that allows for certification of quantum measurements and quantum states (pure or mixed). Moreover, no general scheme for certification of composite measurements based on quantum networks has been proposed.


The main goal of this work is to tackle these challenges by providing a universal scheme that can self-test (up to complex conjugation) any extremal generalized quantum measurement (POVM) on any finite-dimensional Hilbert space. Since all projective measurements are extremal, they are included as well. We then extend the scheme to indirectly self-test any non-extremal quantum measurement. On the other hand, our method also enables DI-certified remote preparation of quantum states, which can also be seen as an indirect way of device-independently certifying of any quantum state, even mixed ones. Overall, our approach offers a unified way to self-test both arbitrary quantum states and measurements.

We utilise the framework of quantum networks consisting of an arbitrary number of parties and sources (as depicted in Fig. \ref{fig1} and explained later in the text). We first propose a Bell inequality that can be used to self-test the two-dimensional tomographically complete set of measurements with arbitrary number of parties. 
Based on this, we are then able to certify any extremal POVM and eventually any quantum state. It is worth mentioning that, unlike the standard self-testing statements, our scheme requires additional causality constraints that are natural in quantum networks, in particular, that the sources are statistically independent. Our scheme is simple and can be implemented using the current infrastructure of quantum networks \cite{netexp1,netexp2,netexp3}. 

Before proceeding toward the main results, let us first introduce the scenario and relevant notions required throughout this work.

\textit{Preliminaries.---}The quantum network scenario under scrutiny is composed of an external group of $N$ Alices, denoted $A_i$ $(i=1,\ldots, N)$, and a central party $E$ called Eve. All the $N+1$ parties are spatially separated. Correspondingly, $N$ independent sources distribute bipartite quantum states---generally entangled---among the parties (cf. Fig. \ref{fig1}). These states are denoted by $\rho_{A_iE_i}$, where the subsystems $A_i$ will be measured by the external parties, whilst the other subsystems $E_i$ go to Eve. Within quantum theory, the joint state of statistically independent sources is represented as a product of the individual states, that is,
\begin{eqnarray}
    \rho_{AE}=\bigotimes_{i=1}^{N}\rho_{A_iE_i},
\end{eqnarray}
where we denoted $A\coloneqq A_1\ldots A_N$ and $E\coloneqq E_1\ldots E_N$. On their shares of the joint state $\rho_{AE}$, each external party $A_i$ performs one of three available measurements, each with two outcomes; the measurement choices and outcomes of party $A_i$ are denoted $x_i=0,1,2$ and $a_i=0,1$, respectively. At the same time, the central party Eve freely chooses to perform one of two measurements, where the first one has $2^N$ outcomes, whereas the second one has $K\leq 2^N$ outcomes. 
We denote Eve's measurement choices by $e=0,1$ and outcomes by $l=0,\ldots,2^N-1$ for $e=0$ and $l=0,\ldots,K-1$ for $e=1$. Let us add here that the first Eve's measurement is used to 
certify the external parties' measurements as well as the states prepared by 
the sources whereas the second measurement is the one to be certified.
%
%
Importantly, the parties cannot communicate classically during the experiment.

After running many rounds of the above measurements, all the parties can bring their data together to reconstruct the set of probability distributions $\vec{p}=\{p(\mathbf{a}l|\mathbf{x}e)\}$ which describe the correlations that are produced between all the observers, where each $p(\mathbf{a}l|\mathbf{x}e)$ is the probability of obtaining outcomes $a_1\ldots a_N=:\mathbf{a}$ by all the Alices and $l$ by Eve after they perform the measurements labelled by $x_1\ldots x_N=:\mathbf{x}$ and $e$, respectively. From Born's rule, we can write
\begin{equation}\label{probs}
p(\mathbf{a}l|\mathbf{x}e)=\Tr\left[\rho_{AE}\left(\bigotimes_{i=1}^{N} M_{a_i|x_i} \otimes R_{l|e}\right)\right],
\end{equation}
where $M_{i,x_i}=\{M_{a_i|x_i}\}$, for every $i$, and $E_{e}=\{R_{l|e}\}$ are the measurements performed by parties $A_i$ and $E$. The measurement elements $M_{a_i|x_i}$ and $R_{l|e}$ are positive semi-definite and sum up to the identity on the respective Hilbert space for every measurement choice $x_i$ or $e$ of all $N+1$ parties. 

\begin{figure}[t]
    \centering
    \includegraphics[width=0.75\linewidth]{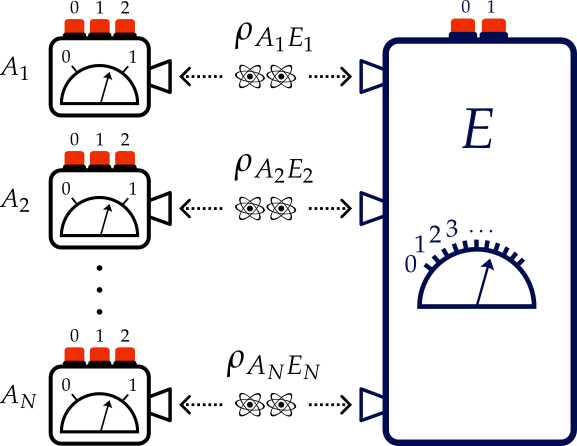}
    \caption{\textbf{Depiction of the quantum network scenario.} 
    It consists of $N+1$ parties, namely, $A_i$ $(i=1,\ldots,N)$, and $E$, and $N$ independent sources distributing bipartite quantum states $\rho_{A_iE_i}$ among the parties as shown in the figure. The central party $E$ shares quantum states with each one of the other external parties $A_i$. While each $A_i$ has three inputs and two outcomes, $E$ has two inputs, The first Eve's measurement has $2^N-1$ outcomes, whereas the second one has $K$ outcomes.}
    \label{fig1}
\end{figure}

To better understand the observed correlations, expressing them in terms of the expected values of external observables can be helpful. These values are defined as
\begin{equation}\label{obspic}
    \langle A_{1,x_1}\ldots A_{N,x_N} R_{l|e} \rangle
     = \sum_{a_1\ldots a_N=0}^{1} (-1)^{\sum_{i=1}^N a_i} p(\mathbf{a}l|\mathbf{x} e).
\end{equation}
It is important to note that by using Eq. \eqref{probs}, these expectation values can be expressed as $\langle A_{1,x_1}\ldots A_{N,x_N} R_{l|e} \rangle = \Tr[(\bigotimes_{i=1}^{N} A_{i,x_i}) \otimes R_{l|e}\rho_{AE}]$, where $A_{i,x_i}$ are quantum operators defined via the measurement elements as $A_{i,x_i}=M_{0|x_i}-M_{1|x_i}$ for every $x_i$ and $i$. Notice that in a particular case when the measurement $\{M_{0|x_i},M_{1|x_i}\}$ is projective, the corresponding operator $A_{x_i}$ becomes a standard hermitian observable with $\pm1$ eigenvalues. For our convenience, in the above representation Eve's measurements remain represented in terms of the measurement elements $R_{l|e}$ instead of observables.

\textit{Self-testing.---}Let us introduce the idea of self-testing by referring back to the scenario shown in Fig. \ref{fig1}. First, we assume that the measurements conducted by the parties and the states $\rho_{A_iE_i}\in\mathcal{L}(\mathcal{H}_{A_i}\otimes\mathcal{H}_{E_i})$ for $i=1,\ldots,N$ are unknown. The only knowledge that Alices and Eve
have about the whole system is encoded in the observed correlations $\vec{p}$.
It is worth pointing out already now that since the dimensions of the local Hilbert spaces $\mathcal{H}_{A_i}$ and $\mathcal{H}_{E_i}$ are unspecified, we can utilise the standard dilation argument and presume that the shared states are pure, i.e., $\rho_{A_iE_i}=\proj{\psi_{A_iE_i}}$. On the same ground we can also assume that the measurements of the external parties are projective. However, this assumption can be dropped by employing a more general sum of squares decomposition of the Bell operators corresponding to the inequalities (\ref{BE1Nm}) which take into account that the measurements are not projective.

Let us then consider another, reference experiment giving rise to the same correlations $\vec{p}$ which is performed on known quantum states shared by the parties 
$\ket{\psi'_{A_iE_i}}\in\mathcal{H}_{A_i'}\otimes\mathcal{H}_{E_i'}$ and with known  measurements $A_{i,x_i}'$ and $E'_e$ of Alices and Eve, respectively, where $\mathcal{H}_{A_i'}$ and $\mathcal{H}_{E_i'}$ are some Hilbert spaces of known dimension. The task of self-testing is to deduce from the observed $\vec{p}$ that the actual experiment is equivalent to the reference one in the following sense: (i) the local Hilbert spaces admit the product form $\mathcal{H}_{A_i}=\mathcal{H}_{A_i'}\otimes\mathcal{H}_{A_i''}$ and $\mathcal{H}_{E_i}=\mathcal{H}_{E_i'}\otimes\mathcal{H}_{E_i''}$ for some auxiliary Hilbert spaces $\mathcal{H}_{A_i''}$ and $\mathcal{H}_{E_i''}$; and (ii) there are local unitary operations  $U_{A_i}:\mathcal{H}_{A_i}\to \mathcal{H}_{A_i'}\otimes\mathcal{H}_{A_i''}$ and $U_{E_i}:\mathcal{H}_{E_i}\to \mathcal{H}_{E_i'}\otimes\mathcal{H}_{E_i''}$
such that
\begin{equation}\label{ststate}
 (U_{A_i}\otimes U_{E_i})\ket{\psi_{A_iE_i}}=\ket{\psi_{A_i'E_i'}'}\otimes\ket{\xi_{A_i''E_i''}},
 %
\end{equation}
where  $\ket{\xi_{A_i''E_i''}}\in\mathcal{H}_{A_i''}\otimes\mathcal{H}_{E_i''}$ is some auxiliary quantum state, and
\begin{equation}\label{stmea}
U_{A_i}\,A_{i,x_i}\,U_{A_i}^{\dagger}=A_{i,x_i}'\otimes\mathbbm{1}_{A_i''},\quad U_{E}\, R_{l|e}\,U_{E}^{\dagger}=R'_{l|e}\otimes\I_{E''},
\end{equation}
where $\mathbbm{1}_{E''}$ is the identity acting on the Eve's auxiliary system $\mathcal{H}_{E''}$ and $U_{E}=\bigotimes_{i=1}^N U_{E_i}$ and we use the notation $E''=E_1''\ldots E_N''$. 

If the above conditions (i) and (ii) are met one says that 
the reference state and measurements are self-tested in the actual experiment from the observed correlations. What is more, for any outcome $l$ of Eve's measurement, the post-measurement states $\rho^{l|e}_{A}$ held by the external parties $A_i$, 
%
%
satisfy
\begin{eqnarray}  \label{stprojstate}U_A\,\rho^{l|e}_{A}\,U_A^{\dagger}=\tilde{\rho}^{l|e}_{A'}\otimes\varrho_{A''}^{l|e},
\end{eqnarray}
where $\tilde{\rho}^{l|e}_{A'}$ are the reference post-measured states and $\varrho_{A''}^{l|e}$ are some auxiliary states and $U_A=U_{A_1}\otimes \ldots\otimes U_{A_N}$ where $U_{A_i}$ are the same 
unitary transformations as those in Eqs. (\ref{ststate}) and (\ref{stmea}), then we say that the post-measurement states are self-tested too. We refer to this method of self-testing as DI-certified remote preparation. Unlike standard self-testing, this approach allows self-testing probabilistically as the state to be certified occurs with the probability of obtaining the outcome $l$ by Eve $p(l|e)$. It should be noted here that the above definition of self-testing the sources (\ref{ststate}), measurements (\ref{stmea}), and the post-measurement states (\ref{stprojstate}) holds up to the complex conjugation of the reference states and measurements. The reason is that the correlations $\vec{p}$ are invariant under complex conjugation of quantum states and measurements.

Let us finally mention that throughout this work, we assume that the local states of $\ket{\psi_{A_iE_i}}$ are full-rank as the local measurements can only be certified on the supports of the local density matrices. 

\textit{Main results.---}We are now ready to present our scheme for self-testing of composite measurements and quantum states. It is divided into three major parts. The first one involves self-testing of the two-dimensional tomographically complete set of Pauli measurements in the measurements of the external parties and the two-qubit Bell states in the states generated by all the sources. The second one involves self-testing any quantum measurement performed by the central party using the states and external parties' measurements as certified in the above first part of the scheme. Lastly, the third part is concerned with applying the first two parts for self-testing the post-measurement state at the external parties' laboratories. As a result, our schemes enable self-testing any quantum state (pure or mixed) distributed between the external parties by Eve's measurements, while also facilitating the self-testing of quantum measurements performed by the central party.

\textit{Part 1.} To self-test the tomographically complete set of measurements in the external parties' measurement devices and the two-qubit singlets in the states generated by the sources, we introduce the following class of $2^N$ Bell inequalities that are suitable modifications of those introduced in Refs.
\cite{Flavio, sarkarPRL}:
\begin{widetext}
\begin{equation}\label{BE1Nm}
\mathcal{I}_{l}=(-1)^{l_1}\left[ (N-1)\left\langle\tilde{A}_{1,1}\prod_{i=2}^N A_{i,1} \right\rangle+\sum_{i=2}^N(-1)^{l_i}\Biggl\langle\tilde{A}_{1,0}A_{i,0}\Biggl\rangle-(-1)^{l_1}\sum_{i=2}^N (-1)^{l_i}\left\langle A_{1,2} A_{i,2} \prod_{\substack{j=2\\j\ne i}}^{N} A_{j,1}\right\rangle\right] \leqslant  \beta_C,
\end{equation}
\end{widetext}
\noindent where $l\equiv l_1\ldots l_N$ such that $l_1,l_2,\ldots,l_N=0,1$ is the binary representation of the outcome $l$, and
\begin{equation}\label{overAm}
    \tilde{A}_{1,0}=\frac{A_{1,0}-A_{1,1}}{\sqrt{2}},\qquad\tilde{A}_{1,1}=\frac{A_{1,0}+A_{1,1}}{\sqrt{2}}.
\end{equation}
Here, $A_{i,j}$ for $i=1,2,\ldots,N$ and $j=0,1,2$ are the observables measured by the external Alices.

Let us now briefly discuss the main properties of these Bell inequalities. First, it is direct to observe that their maximal values attainable using local deterministic strategies $\beta_C$, often referred to as the classical bound, are the same for any $l$ and amount to $\beta_C=(\sqrt{2}+1)(N-1)$ (see Fact 1 of \cite{SupMat} for a proof). Second, the maximal quantum values of $\mathcal{I}_l$, referred to as the quantum or Tsirelson's bounds, equal $3(N-1)$ and are achieved by the following observables of the external parties
\begin{eqnarray}\label{GHZObsm}
A_{1,0}&=&\frac{X+Z}{\sqrt{2}},\qquad A_{1,1}= \frac{X-Z}{\sqrt{2}},
\qquad A_{1,2}=Y, \nonumber\\
A_{i,0}&=&Z,\qquad A_{i,1}=X,\qquad A_{i,2}=Y
\end{eqnarray}
with $i=2,\ldots,N$ and the GHZ-like states

\begin{equation}\label{GHZvecsm}
\ket{\phi_l}=\frac{1}{\sqrt{2}}(\ket{l_1\ldots l_N}+(-1)^{l_{1}}|\overline{l}_1\ldots\overline{l}_N\rangle),
\end{equation}
where $l_1,l_2,\ldots,l_N=0,1$ and $\overline{l}_i=1-l_i$ (for a proof see Fact 2 in \cite{SupMat}). Notice that, the measurements in Eq. \eqref{GHZObsm} form a two-dimensional tomographically complete set of measurements for every party. We now comprise the result in the following theorem:

\begin{thm}\label{theo1m}
    Consider the scenario depicted in Fig. \ref{fig1}. If the Bell inequalities $\mathcal{I}_l$ \eqref{BE1Nm} for any $l$ are maximally violated when Eve chooses the input $e=0$ with each outcome occurring with probability $\overline{P}(l|0)=1/2^N$, then the measurements of the external parties are equivalent, according to Eq. \eqref{stmea}, to the reference measurements in \eqref{GHZObsm}, whereas the states $\ket{\psi_{A_iE_i}}$ are equivalent in the sense of Eq. (\ref{ststate}) to the two-qubit maximally entangled state $\ket{\phi^+}=(\ket{00}+\ket{11})/\sqrt{2}$.
\end{thm}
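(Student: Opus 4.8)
The plan is to prove Theorem~\ref{theo1m} via the standard self-testing strategy built around a sum-of-squares (SOS) decomposition of the Bell operators. First I would observe that the structure of $\mathcal{I}_l$ in~\eqref{BE1Nm} is essentially a tilted chained-CHSH-type expression tensored across the $N$ parties, so the maximal quantum value $3(N-1)$ should be certified by writing each shifted Bell operator $\beta_C\,\I - \mathcal{B}_l$ (with $\mathcal{B}_l$ the operator associated with $\mathcal{I}_l$) as a positive-semidefinite combination $\sum_k P_k^\dagger P_k$ of operator polynomials $P_k$ in the $A_{i,x_i}$. Since maximal violation forces $\langle\psi|\mathcal{B}_l|\psi\rangle = 3(N-1)$, each $P_k$ must annihilate the state: $P_k\ket{\psi}=0$. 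These SOS relations are what drive the whole argument, and they are the natural place to accommodate non-projective measurements, as the text already hints.

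The second step is to extract algebraic relations among the observables from the SOS conditions. From $P_k\ket\psi=0$ I expect to recover, for the first party, relations forcing $\tilde A_{1,0}$ and $\tilde A_{1,1}$ (defined in~\eqref{overAm}) to behave like anticommuting $\pm1$-eigenvalue operators on the support of the local state, i.e.\ $\{A_{1,0},A_{1,1}\}\ket\psi=0$, and for each $i\ge 2$ that $A_{i,0}$ and $A_{i,1}$ anticommute on the state as well; these reproduce the Pauli algebra of $X$ and $Z$ in~\eqref{GHZObsm}. The crucial additional ingredient beyond ordinary self-testing is the certification of $A_{i,2}=Y$: the expression~\eqref{BE1Nm} only contains $x_i=2$ terms in the third bracket, so maximal violation pins down $A_{i,2}$ only up to sign relative to $i(A_{i,0}A_{i,1} - A_{i,1}A_{i,0})/2$. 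Here is where the assumption on Eve---that each outcome $l$ of her $e=0$ measurement occurs with probability $1/2^N$ and simultaneously maximizes the corresponding $\mathcal{I}_l$---is used: running over all $2^N$ inequalities labelled by $l$, the different sign patterns $(-1)^{l_i}$ select a full tomographically complete triple on each party and fix the relative orientation of $A_{i,2}$, removing the residual ambiguity up to a global complex conjugation.

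The third step is the standard swap/localizing construction. Using the recovered Pauli relations I would define, on each pair $(A_i,E_i)$, the local unitaries $U_{A_i}, U_{E_i}$ from the certified observables (the usual conditional swap gate built from the $\pm1$ observables playing the role of $X$ and $Z$), and verify that they bring $\ket{\psi_{A_iE_i}}$ to the product form~\eqref{ststate} with $\ket{\phi^+}$ on the primed space, and conjugate each $A_{i,x_i}$ to the reference observable tensored with identity as in~\eqref{stmea}. Because the joint state factorizes as $\bigotimes_i \rho_{A_iE_i}$ and each inequality $\mathcal{I}_l$ couples party $1$ to each party $i$ pairwise, the single-copy CHSH-type self-testing on each arm lifts to the full star network; the statistical independence of the sources (the network causality constraint emphasized in the introduction) is what guarantees this tensor structure rather than merely a GHZ correlation among the Alices.

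The main obstacle I anticipate is the joint certification of the third measurement $A_{i,2}$ together with correctly propagating the GHZ structure~\eqref{GHZvecsm} across all $l$. Self-testing the complementary observable $Y$ is notoriously delicate because it is not fixed by a single real-valued correlation inequality---this is precisely why the family is indexed by $l$ and why the uniform-probability condition $\overline P(l|0)=1/2^N$ is imposed. Establishing rigorously that the full collection of SOS relations over all $2^N$ values of $l$ forces anticommutation of $A_{i,2}$ with both $A_{i,0}$ and $A_{i,1}$, consistently across parties and only up to a single global conjugation, rather than independent per-party or per-$l$ conjugations, is the technically demanding core of the argument; the remaining swap-gate bookkeeping is routine once these relations are in hand.
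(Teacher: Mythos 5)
Your first two steps match the paper's strategy: the paper indeed derives a sum-of-squares decomposition of the shifted Bell operators, extracts from maximal violation the relations forcing $\{A_{i,0},A_{i,1}\}=0$ on the relevant supports, and invokes the standard structure theorem for anticommuting $\pm 1$-valued observables to bring the Alices' first two observables to the form \eqref{GHZObsm}. However, your third step contains a genuine gap that would make the proof fail. The states that maximally violate the inequalities $\mathcal{I}_l$ are not the source states $\ket{\psi_{A_iE_i}}$ but the post-measurement states shared by the Alices after Eve announces the outcome $l$; the inequalities \eqref{BE1Nm} couple Alice $1$ to the other Alices and never couple $A_i$ to $E_i$. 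Consequently there is no ``CHSH-type self-testing on each arm'' to lift: on Eve's local systems $E_i$ there are no certified observables at all (for $e=0$ Eve performs a single joint measurement on $E_1\ldots E_N$), so a conditional swap gate on $\mathcal{H}_{E_i}$ cannot be built from measured operators, and no bipartite Bell violation on the arm $(A_i,E_i)$ is available to certify $\ket{\psi_{A_iE_i}}$ as $\ket{\phi^+}$.

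The paper's route around this is precisely the part your proposal is missing. First, the SOS relations certify the post-measurement states to be GHZ-like, $U_A\,\rho^l_A\,U_A^{\dagger}=\proj{\phi_l}_{A'}\otimes\tilde{\rho}^l_{A''}$, for every $l$. Only then does the assumption $\overline{P}(l|0)=1/2^N$ enter: writing each source state in Schmidt form $\ket{\tilde{\psi}_{A_iE_i}}=\sqrt{d_i}\,(\I\otimes P_{E_i})\ket{\phi^{+,d_i}_{A_iE_i}}$, substituting into the defining relation of the post-measurement states, and summing over $l$ using $\sum_l R_{l|0}=\I$ forces $P_{A_i}=\I_{A_i'}\otimes\sqrt{\sigma_{A_i''}/2}$, i.e.\ the qubit part of every source is exactly maximally entangled; Eve's unitaries $U_{E_i}$ are defined via the Schmidt bases, not via swap gates, and the same computation simultaneously certifies $R_{l|0}$ as the GHZ-basis projector. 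Relatedly, you misattribute the role of the uniform-probability condition: it is not what fixes $A_{i,2}$. The certification $A_{i,2}=\pm Y\otimes\I$ (with one global sign across all parties) follows from the third family of SOS relations evaluated on the already-certified GHZ states, by expanding $A_{i,2}$ in the Pauli basis with operator coefficients and using that the junk states $\tilde{\rho}^l_{A''}$ are full rank --- a fact which is itself a consequence of the source and GHZ-measurement certification above. Without the Schmidt-operator argument your proof cannot conclude maximal entanglement of the sources, and hence can establish neither \eqref{ststate} nor the GHZ-basis form of Eve's first measurement.
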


\noindent\textit{Sketch of the proof.} Here we provide a sketch of the proof, deferring its full version to Appendix A. First, Eve performs the measurement corresponding to the input $e=0$ and observes an outcome $l$, in which way she prepares a post-measurement state shared by the external parties. Now, observation of maximal violation of the Bell inequalities $\mathcal{I}_l$ by the external parties allows one to self-test the post-measurement state as well as the external parties' measurements. Then, the additional condition that the probabilities of obtaining the oucome $l$ by Eve when measuring $E_0=\{R_{l|0}\}$ is $1/2^N$ enables self-testing the states generated by the sources. This in turn also allows us to certify that Eve's measurement $E_0$ is the Greenberger-Horne-Zeilinger (GHZ) basis $\{\proj{\phi_l}\}$ where $\ket{\phi_l}$ are given in \eqref{GHZvecsm}.

\textit{Part 2.} Now, that the measurements performed by the external parties as well as the states generated by the sources are self-tested, we can show how the quantum network can be used to self-test any extremal POVM in the second measurement $E_1$ performed by the central party (see Fig. \ref{fig2}). Recall that a measurement is extremal if it cannot be expressed as a convex combination of other POVMs (cf. Ref. \cite{APP05} for a precise definition and a couple of criteria allowing one to decide whether a given measurement is extremal.)

For this purpose, let us assume that Eve's reference measurement $E_1'=\{R'_{l|1}\}$ is now some $K$-outcome extremal POVM defined on $\mathcal{H}_N=\left(\mathbbm{C}^2\right)^{\otimes N}$. 
Now, to self-test this reference measurement in $E_1$ the parties check whether for all outcomes $l=0,\ldots,K-1$ and sequences $i_1,\ldots,i_N$ with $i_j=0,1,2,3$, the correlations $\vec{p}$ satisfy the following conditions,
\begin{eqnarray}\label{betstatfull0}
\left\langle\tilde{A}_{1,i_1}\otimes\bigotimes_{k=2}^N A_{k,i_k}\otimes (R_{l|1})_E\right\rangle&=&f^l_{i_1,\ldots,i_N},
\end{eqnarray}
where $f^l_{i_1,\ldots,i_N}$ are real numbers coming from the decompositions of the measurement operators $R'_{l|1}$ in the  $N$-fold tensor products of the qubit Pauli matrices [see Appendix B of \cite{SupMat} for details]. We can thus state our next result.
\begin{thm}\label{theorem4m2}  Consider the scenario depicted in Fig. \ref{fig2} with states generated by the sources and measurements of the external parties being self-tested as in Theorem \ref{theo1m}. If the observed correlations $\vec{p}$ additionally satisfy the conditions \eqref{betstatfull0} for Eve's measurement $E_1=\{R_{l|1}\}$, then this measurement is self-tested to be the reference extremal POVM $E'_1$, according to the definition \eqref{stmea}. 
\end{thm}

\begin{figure}[t]
    \centering
    \includegraphics[width=0.8\linewidth]{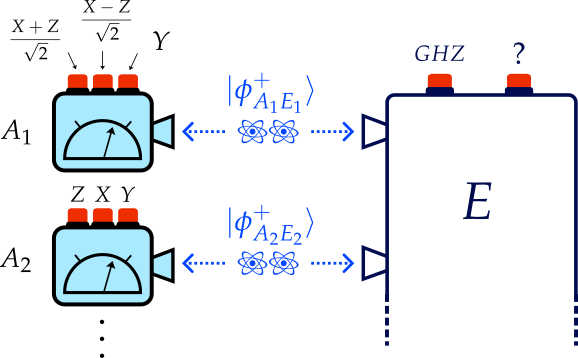}
    \caption{\textbf{Self-testing any quantum measurement.} First, we have self-tested all the measurements performed by each external party $A_i$ as \eqref{GHZObsm}, as well as, the sources as the maximally entangled two-qubit state $\ket{\phi_{A_iE_i}^+}$. Thus, Eve's first measurement is certified to the one in the GHZ basis \eqref{GHZvecsm}. These are the ingredients sufficient to certify Eve's arbitrary second measurement.}
    \label{fig2}
\end{figure}

The proof of this statement can be found in Appendix B of \cite{SupMat}.
Importantly, the above statement can be extended to any extremal measurement defined on a $D$-dimensional Hilbert space with arbitrary $D$.
In fact, one can always embed the measurement $\{R'_{l|1}\}$ into 
$\mathcal{H}_N=(\mathbbm{C}^2)^{\otimes N}$ where $N$ is the smallest natural number such that $D<2^N$. Then, in order to make the measurement fully supported on $\mathcal{H}_N$ one completes it with an additional projector $M^{\perp}$ such that $\sum_l R'_{l|1}+M^{\perp}=\mathbbm{1}_{2^N}$. This gives rise to an $(K+1)$-outcome extremal measurement to which Theorem \ref{theorem4m2} applies. Note that, in the case of composite measurements acting on $\bigotimes_{i}\mathbbm{C}^{D_i}$ each of the $i-$th local Hilbert spaces need to be embedded in $(\mathbbm{C}^2)^{\otimes n_i}$ such that $\sum_in_i=N$. 


We thus obtain a general way of certifying any extremal generalized measurement in the quantum networks scenario. Our scheme is based on violation of a class of simple Bell inequalities \eqref{BE1Nm} and some additional conditions \eqref{betstatfull0} that the observed correlations $\vec{p}$ must satisfy. It is however worth noticing that the generality of our scheme comes at the price of the complexity that grows fast with the dimension of the measurement under consideration and thus with the number of involved external parties. 

\textit{Indirect way to self-test non-extremal measurements.} Any non-extremal POVM $E$ can be expressed as a convex combination of extremal POVMs $E^{\mathrm{ex}}_e$, that is, $E=\sum_{e}p_e E^{\mathrm{ex}}_e$ (cf. Ref. \cite{APP05}). Consequently, a way to implement this non-extremal POVM in the star network in Fig. \ref{fig1} is to let Eve have $s$ inputs $e=1,\ldots,s$ corresponding to the measurements 
$E^{\mathrm{ex}}_e$, each chosen with probability $p_e$. Now, the above self-testing scheme for extremal measurements (Theorem \ref{theorem4m2}) can be used to self-test any of Eve's measurements corresponding to inputs $e=1,\ldots,s$ according to Definition \ref{stmea}. Consequently, one has an indirect way to self-test any non-extremal POVM in the star network.

\textit{Part 3.} Finally, we are going to show now that using the above setup one can self-test any quantum state. Consider again the network scenario depicted in Fig. \ref{fig2}. It is clear that Eve could remotely prepare different quantum states with the external parties by performing the concerned quantum measurement on the maximally entangled states generated by the sources. Thus, without loss of generality, the sources along with Eve can be considered a new preparation device. For instance, consider that Eve performs a projective measurement $E_1=\{\proj{e_1},\ldots,\proj{e_K}\}$. Then the post-measurement state shared by the external parties when she obtains an outcome $l$ is $\proj{e_l^*}$ where $*$ represents the complex conjugate. 
As a corollary to the previous statements, we can state the following fact.

\setcounter{thm}{0}
\begin{cor}
    Consider the scenario depicted in Fig. \ref{fig2}. If the states generated by the sources along with the measurements performed by the external parties and Eve maximally violate the Bell inequalities \eqref{BE1Nm} and satisfy the statistics \eqref{betstatfull0} when Eve's reference measurements are extremal, then the post-measurement states with the external parties can be self-tested according to \eqref{stprojstate}.
\end{cor}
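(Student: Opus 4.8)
\textit{Proof proposal.} The plan is to combine the three preceding self-testing statements and then invoke the standard remote-state-preparation identity for maximally entangled states. By the hypotheses, all assumptions of Theorem~\ref{theo1m} and of Theorem~\ref{theorem4m} (projective case) or Theorem~\ref{theorem4m2} (rank-one extremal case) are satisfied. Hence there exist local unitaries $U_{A_i}$ and $U_{E_i}$ that simultaneously bring the sources to the reference form \eqref{ststate}, i.e.\ $(U_{A_i}\otimes U_{E_i})\ket{\psi_{A_iE_i}}=\ket{\phi^+_{A_i'E_i'}}\otimes\ket{\xi_{A_i''E_i''}}$, and Eve's second measurement to the reference form \eqref{stmea}, i.e.\ $U_E R_{l|1} U_E^\dagger=R'_{l|1}\otimes\I_{E''}$ with $U_E=\bigotimes_i U_{E_i}$. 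The first key point is that these are the \emph{same} unitaries in both statements, since Theorems~\ref{theorem4m} and \ref{theorem4m2} are built on top of the certification provided by Theorem~\ref{theo1m}; this is what licenses applying them jointly to the post-measurement state.

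I would then substitute these relations into the definition of $\tilde{\rho}_A^{l|1}$ and conjugate by $U_A=\bigotimes_i U_{A_i}$. Since $U_A$ acts trivially on $E$, it can be pushed inside the partial trace over $E$, and because the partial trace is invariant under a unitary acting on the traced-out system one may freely insert $U_E^\dagger U_E$, so that the whole integrand gets conjugated by $U_A\otimes U_E$. Applying the self-testing relations turns the global state into $\bigotimes_i(\proj{\phi^+_{A_i'E_i'}}\otimes\proj{\xi_{A_i''E_i''}})$ and the measurement element into $\I_A\otimes R'_{l|1}\otimes\I_{E''}$. The trace over $E=E'E''$ then factorizes: tracing $\I_{E''}$ against the auxiliary factors yields a fixed junk state $\varrho_{A''}=\bigotimes_i\trace_{E_i''}[\proj{\xi_{A_i''E_i''}}]$, independent of $l$.

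The heart of the computation is the remaining trace over $E'$ against the $N$-fold maximally entangled state, for which I would use $\trace_{E'}[(\I_{A'}\otimes R'_{l|1})\proj{\Phi^+}]=(R'_{l|1})^T/2^N$, valid for $\ket{\Phi^+}=\bigotimes_i\ket{\phi^+_{A_i'E_i'}}$ with the transpose taken in the Schmidt basis of $\ket{\Phi^+}$. Together with the normalization of the density matrix, which forces $\overline{P}(l|1)=\trace[R'_{l|1}]/2^N$ (equal to $r_l/2^N$ in the projective case), this gives
\[
U_A\,\tilde{\rho}_A^{l|1}\,U_A^\dagger=\frac{(R'_{l|1})^T}{\trace[R'_{l|1}]}\otimes\varrho_{A''},
\]
which is precisely the factorized form \eqref{stprojstate} with reference post-measurement state $\tilde{\rho}_{A'}^{l|1}=(R'_{l|1})^T/r_l$ and junk state $\varrho_{A''}^{l|1}=\varrho_{A''}$. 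For a rank-one extremal POVM one has $R'_{l|1}=\lambda_l\proj{\phi_l}$, and the expression collapses to $\proj{\phi_l^*}$, matching the remark in Part~3; note that since $R'_{l|1}$ is Hermitian, $(R'_{l|1})^T=(R'_{l|1})^*$, so the transpose is exactly the complex conjugate.

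The main obstacle I anticipate is bookkeeping rather than conceptual. One must verify that the conjugation produced by the steering identity is consistent with the complex-conjugation freedom already built into the self-testing definition \eqref{stprojstate}, so that identifying $(R'_{l|1})^T$ with the reference post-measurement state is legitimate; and one must confirm that the tensor split between the certified registers $E'$ and the junk registers $E''$ is exactly the split in which $R'_{l|1}\otimes\I_{E''}$ is written, so that the measurement acts only on the maximally entangled part and commutes with the $E''$-trace. Both facts are guaranteed by the structure of \eqref{ststate}--\eqref{stmea}, but they are the points where the argument would silently fail if the state and measurement certifications used incompatible isometries.
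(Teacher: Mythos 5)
Your proposal is correct and takes essentially the same approach as the paper's own proof in Appendix D: you substitute the certified sources and Eve's certified measurement (using the fact, which you rightly flag, that Theorems \ref{theorem4m} and \ref{theorem4m2} are stated with the same unitaries as Theorem \ref{theo1m}) into the definition of the post-measurement state, apply the maximally-entangled-state identity $(\I\otimes Q)\ket{\phi^{+}}=(Q^{T}\otimes\I)\ket{\phi^{+}}$, trace out Eve's systems, and fix the normalization via $\overline{P}(l|1)=\Tr[R'_{l|1}]/2^{N}$. Your transpose bookkeeping, with $(R'_{l|1})^{T}=(R'_{l|1})^{*}$, reproduces exactly the paper's two branches corresponding to the conjugation ambiguity $A_{i,2}=\pm Y\otimes\I$, so no gap remains.
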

As described below Eq. \eqref{stprojstate}, the state only occurs with a probability $p(l|e=1)$ and thus the success probability of self-testing the concerned state is $p(l|e=1)$.
The proof of this statement can be found in Appendix C of \cite{SupMat}. 
It is important to emphasise here that our approach of certified remote state preparation also allows device-independent certification of separable states and mixed states of any local dimension with the external parties, according to \eqref{ststate}.

It is simple to observe from the above Corollary that any pure state $\ket{\psi}$ can be self-tested with our scheme by appropriately choosing Eve's projective measurement such that one of its measurement elements, say $R'_{0|1}$
is exactly $R'_{0|1}=\proj{\psi}$.

Importantly, however, the scheme is not restricted to pure states only and can straightforwardly extended to mixed states. Consider a reference mixed state $\rho=\sum_kp_k\proj{\psi_k}$ acting on $\mathbbm{C}^d$ which we aim to self-test.
For this purpose, Eve needs to perform an extremal $3d-$outcome POVM in the Hilbert space $\mathbbm{C}^{2d}$ on her share of the joint state $\ket{\psi_{AE}}$ to create $\rho$ at the external parties' labs with the aid of post-processing. To construct this POVM, we first define two sets of $d$ mutually orthogonal vectors $\{\ket{\psi_k}\}$ and  $\{\ket{\phi_k}\}$ such that $\bra{\psi_k}\phi_{k'}\rangle=0$ for any $k,k'$. One then considers a pair $\{\ket{\psi_k},\ket{\phi_k}\}$ to construct a three-outcome trine POVM [cf. Ref. \cite{remik1}] as 
\begin{eqnarray}
    M_{k,1}&=&p_k \proj{\psi_k},\quad M_{k,2}=\alpha_k\proj{\tau_{k,2}},\nonumber\\ M_{k,3}&=&\beta_k\proj{\tau_{k,3}},
\end{eqnarray}
where for all $k$ the vectors $\ket{\tau_{k,2}}$ and $\ket{\tau_{k,3}}$ are certain linear combinations of the vectors $\ket{\psi_k}$ and $\ket{\phi_k}$ whose explicit forms are given in Eq. (164) in \cite{SupMat}
and
%
%
    $\alpha_k=\beta_k=(2-p_k)/2$.
%
The desired extremal $3d$-outcome rank-one POVM is one which is composed of all the trine POVM's, i.e., $E'_1=\{M_{k,m}\}_{k,m}$.

Now, when Eve performs this measurement on her share of the joint state and obtain the outcomes $l\equiv (k,1)$ with probability $p_k/2^N$, the post-measurement (normalized) states of the external parties are given by $\tilde{\rho}_{(k,1)}=\proj{\psi_k}$. These states are self-testable using Corollary 1. Then, the average state when Eve obtain the outcomes $(k,1)$ over all $k$ is thus certified to be
\begin{eqnarray}
    U_A\,\rho_{A}\,U^{\dagger}_A=\frac{1}{\sum_kp(k,1)}\sum_{k}p(k,1) U_A\,\rho_{A}^{(k,1)}\,U_A^{\dagger}\nonumber\\=\left(\sum_{k}M_{(k,1)}\right)\otimes \tilde{\rho}_{A''}=\rho_{A'}\otimes \tilde{\rho}_{A''},
\end{eqnarray}
or its complex conjugate, 
which is exactly what we promised. Let us  say that the external parties want to perform a task with the state $\rho$. Then, they perform their task with all the post-measurement states of Eve and after the task is done, just like Bell's scenario Eve announces the outcomes. Now all the parties only consider the statistics when Eve obtains $(k,1)$ and discard the rest of the statistics. 

In practice, it is never possible to obtain the exact statistics required for self-testing of states and measurements. In Appendix D, we analyse the effect of noise on the self-testing correlations. In particular, we show in Theorem 4 that if Eve's actual measurement corresponding to $e=1$ in the device deviates from the ideal one, with some error $\varepsilon$, then the observed correlations will be close to the ideal ones with an error which is a function of $\sqrt{\varepsilon}$ and the number of external parties $N$. For our analysis, we do not impose any restrictions on the states generated by the source, the measurements of the external parties or the noise model. Consequently, the derived robustness statement serves as an upper bound on the error in the correlations, which can be further improved by using numerical techniques or particular noise models. For Eve's measurement with input $e=0$ the proof in \cite{sarkarPRL} for $N=3$ can be straightforwardly adapted to arbitrary $N$, allowing one to show that if 
the observed values of the functionals $\mathcal{I}_l$ are $\varepsilon$-close to the 
optimal values, then the measurement is $\sqrt{2\varepsilon}$ close to the ideal GHZ-state measurements.

\textit{Discussions.---}
After completing our work, we have identified several interesting follow-up questions. 
First of all, our certification scheme assumes that the sources are independent and therefore one crucial follow-up question is whether the scheme can be generalised to the case when the sources are not independent and exhibit either classical or quantum correlations. A possible approach is to give Eve additional inputs, which could then be used to self-test each source, ensuring it produces a maximally entangled state up to some some ancillary junk state. Thus, the sources would be independent on the support of the maximally entangled states. However, the states could still be entangled through the junk part, which would require more advanced techniques beyond the scope of this work. Additionally, our current scheme requires highly entangled states and measurements, which can be costly to generate. Therefore, it would be interesting to explore the possibility of self-testing quantum states and measurements using partially entangled states and measurements. A possible way is to generalise the idea of self-testing the GHZ basis stated above to other measurements is through finding Bell inequalities that are maximally violated by all post-measured states corresponding to the measurement elements in the other basis. A simpler problem in this regard will be to consider partially entangled states in the network and then find Bell inequalities that are maximally violated by the corresponding post-measured states, which will again be some partially entangled states. The simplest star network, which is composed of two external parties and a central party, has been implemented in \cite{netexp1, netexp2, netexp3} on optical platforms. Our scheme requires one to extend this setup to the scenario with a larger number of external parties. However, we recognise that performing a GHZ basis measurement is challenging to implement in an optical setup. Thus, it will be highly relevant to provide alternate schemes that do not require GHZ measurements.

\begin{center}
    \textbf{Acknowledgements}
\end{center}
 This project was funded within the QuantERA II Programme (VERIqTAS project) that has received funding from the European Union’s Horizon 2020 research and innovation programme under Grant Agreement No 101017733 and from the Polish National Science Center (project No 2021/03/Y/ST2/00175). SS also acknowledges the National Science Center, Poland, grant Opus 25, 2023/49/B/ST2/02468. AO acknowledges the Polish National Science Center (Grant No. 2022/46/E/ST2/00115).

\providecommand{\noopsort}[1]{}\providecommand{\singleletter}[1]{#1}%
%




\onecolumngrid

\appendix

\setcounter{thm}{0}
\section{Appendix A: Self-testing of N-GHZ bases, measurements of external parties, and the states prepared by the preparation devices}
\subsection{Bell inequalities}

Let us consider the following $2^N$ Bell inequalities inspired from \cite{Flavio, sarkarPRL}:
\begin{eqnarray}\label{BE1N}
\mathcal{I}_{l}=(-1)^{l_1} \left\langle(N-1)\tilde{A}_{1,1}\otimes\bigotimes_{i=2}^N A_{i,1} +\sum_{i=2}^N(-1)^{l_i}\tilde{A}_{1,0}\otimes A_{i,0}-(-1)^{l_1}\sum_{i=2}^N (-1)^{l_i}A_{1,2}\otimes A_{i,2}\otimes \bigotimes_{\substack{j=2\\j\ne i}}^{N} A_{j,1} \right\rangle\leqslant  \beta_C,
\end{eqnarray}
where $l\equiv l_1l_2\ldots l_N$ with $l_i=0,1$ for each $i=1,\ldots,N$ is the binary representation of $l$, and
\begin{eqnarray}\label{overA}
    \tilde{A}_{1,0}=\frac{A_{1,0}-A_{1,1}}{\sqrt{2}},\qquad\tilde{A}_{1,1}=\frac{A_{1,0}+A_{1,1}}{\sqrt{2}}.
\end{eqnarray}
Here, $A_{i,j}$ for $i=1,2,\ldots,N$ and $j=0,1,2$ correspond to the observables of all the parties.

\begin{fakt}\label{classfact}
    The maximal classical value of the Bell expressions $\mathcal{I}_l$ for any $l$ is $\beta_C= (\sqrt{2}+1)(N-1)$.
\end{fakt}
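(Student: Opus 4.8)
The plan is to exploit the standard fact that a Bell expression, being linear in the correlators, attains its maximum over local-hidden-variable models at a deterministic local assignment. So I would first replace every observable $A_{i,j}$ by a fixed value $a_{i,j}\in\{+1,-1\}$, in which case each expectation of a tensor product factorises into the product of the corresponding $\pm1$ numbers, and $\tilde A_{1,0},\tilde A_{1,1}$ become the reals $\tilde a_{1,0}=(a_{1,0}-a_{1,1})/\sqrt2$ and $\tilde a_{1,1}=(a_{1,0}+a_{1,1})/\sqrt2$. The key elementary remark is that $\tilde a_{1,0}\tilde a_{1,1}=0$ while $\tilde a_{1,0}^2+\tilde a_{1,1}^2=2$, so in any deterministic strategy exactly one of these two quantities equals $\pm\sqrt2$ and the other vanishes.

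For the upper bound I would split $\mathcal I_l$ into (i) the two terms carrying the prefactors $\tilde a_{1,1}$ and $\tilde a_{1,0}$, and (ii) the remaining sum built from $A_{1,2}$ and the $A_{i,2}$. Writing $\epsilon_i=(-1)^{l_i}$ and using the dichotomy above, part (i) reduces either to $\pm(N-1)\sqrt2\prod_{i=2}^N a_{i,1}$ or to $\pm\sqrt2\sum_{i=2}^N\epsilon_i a_{i,0}$; since $|\sum_{i=2}^N\epsilon_i a_{i,0}|\le N-1$, part (i) is bounded by $\sqrt2(N-1)$ in both cases. Part (ii) is a signed sum of $N-1$ terms each equal to $\pm1$, hence bounded by $N-1$. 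Adding these two pointwise bounds yields $\mathcal I_l\le(\sqrt2+1)(N-1)$ for every assignment and every $l$, which already gives the claimed inequality and, incidentally, makes the independence from $l$ manifest since $l$ enters only through the signs $\epsilon_i$.

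For attainability I would exhibit one explicit deterministic strategy saturating both bounds simultaneously; the point that makes this painless is that the variables governing part (i) and those governing part (ii) are essentially disjoint. Concretely, choosing $a_{1,0}=-a_{1,1}$ so that $\tilde a_{1,0}=\pm\sqrt2$, fixing each $a_{i,0}$ to align the middle sum, taking all $a_{j,1}=1$, and picking $a_{1,2}$ and the $a_{i,2}$ so that every summand of part (ii) equals $+1$, one reaches $\sqrt2(N-1)+(N-1)=(\sqrt2+1)(N-1)$, matching the bound.

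I expect the only genuinely delicate step to be the bookkeeping around $\tilde A_{1,0},\tilde A_{1,1}$: since these are linear combinations of observables rather than observables themselves, one must justify by linearity of the expectation in a local model that they may be replaced by the numbers $\tilde a_{1,0},\tilde a_{1,1}$, and then observe that the $\pm1$ constraint on $a_{1,0},a_{1,1}$ is exactly what forces the orthogonality $\tilde a_{1,0}\tilde a_{1,1}=0$ underlying the term-by-term estimate. Everything after that is a routine comparison of signed sums of $\pm1$ values, so I do not anticipate further obstacles.
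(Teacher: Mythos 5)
Your proof is correct and follows essentially the same route as the paper's: reduction to local deterministic strategies, the dichotomy that exactly one of $\tilde a_{1,0},\tilde a_{1,1}$ equals $\pm\sqrt2$ while the other vanishes (the paper phrases this as the two cases $A_{1,0}=A_{1,1}$ versus $A_{1,0}\neq A_{1,1}$), and the separate bounds $\sqrt2(N-1)$ for the first two terms and $N-1$ for the last sum. Your explicit saturating assignment is slightly more detailed than the paper, which merely asserts attainability of both partial bounds, but the argument is the same.
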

\begin{proof}To compute the maximal classical value of $\mathcal{I}_l$ it is enough to optimize 
it over local deterministic strategies for which all the expectation values are products of the local ones, and, moreover, the local expectation values equal either $1$ or $-1$. Let us begin by noting that the maximal value of the last sum in Eq. (\ref{BE1N}) over such strategies is $N-1$ and it is independent of the choice of the maximal values of the first two term. To determine the latter we can
to consider two cases: $A_{1,0}=A_{1,1}$ and $A_{1,0}\neq A_{1,1}$. In the first one, the maximal value of the first term in Eq. (\ref{BE1N}) is $\sqrt{2}(N-1)$, whereas the second term vanishes. In the second case, the first term in $\mathcal{I}_l$ vanishes, whereas the maximal value of the second term
is again $\sqrt{2}(N-1)$. Thus, $\beta_C=(\sqrt{2}+1)(N-1)$.
%
\end{proof}
\begin{fakt}The maximal quantum value of the Bell expression $\mathcal{I}_l$ is $3(N-1)$ and it is achieved by the following observables 
\begin{eqnarray}\label{GHZObs}
A_{1,0}=\frac{X+Z}{\sqrt{2}},\quad A_{1,1}= \frac{X-Z}{\sqrt{2}},\quad A_{1,2}=Y, \quad A_{i,0}=Z,\quad A_{i,1}=X,\quad A_{i,2}=Y \quad(i=2,\ldots,N)
\end{eqnarray}
as well as the GHZ-like states 
\begin{equation}\label{GHZvecs}
\ket{\phi_l}=\frac{1}{\sqrt{2}}(\ket{l_1\ldots l_N}+(-1)^{l_{1}}|\overline{l}_1\ldots\overline{l}_N\rangle),
\end{equation}
where $l\equiv l_1l_2\ldots l_N$ with $l_1,l_2,\ldots,l_N=0,1$ and $\overline{l}_i=1-l_i$. 
\end{fakt}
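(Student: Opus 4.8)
The plan is to pass to the Bell operator $\mathcal{B}_l$ defined by $\mathcal{I}_l=\langle\mathcal{B}_l\rangle$ and to establish two matching bounds: that the reference configuration attains $3(N-1)$, and that $\mathcal{B}_l\preceq 3(N-1)\,\I$ as an operator inequality for every choice of dichotomic ($\pm1$) observables, i.e. with $A_{i,x}^2=\I$. First I would reduce to $l=0$. Since a sign flip $A_{i,x}\to-A_{i,x}$ preserves $A_{i,x}^2=\I$, I can absorb the overall $(-1)^{l_1}$ into $A_{1,0},A_{1,1}$ (which sends $\tilde A_{1,x}\to(-1)^{l_1}\tilde A_{1,x}$) and the factors $(-1)^{l_i}$ into $A_{i,0}$ and $A_{i,2}$; a short check of the three groups of terms shows this turns $\mathcal{B}_l$ into $\mathcal{B}_0$ written through the relabelled observables, so the maximal quantum value is independent of $l$. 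After this reduction the operator groups conveniently as
\[ \mathcal{B}_0=\sum_{i=2}^N B_i,\qquad B_i=\tilde A_{1,1}\otimes P+\tilde A_{1,0}\otimes A_{i,0}-A_{1,2}\otimes A_{i,2}\otimes P_i, \]
where $P=\bigotimes_{j=2}^N A_{j,1}$ and $P_i=\bigotimes_{j=2,\,j\ne i}^N A_{j,1}$, the coefficient $(N-1)$ of the first term distributing one copy of $\tilde A_{1,1}\otimes P$ to each block.

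For the achievability bound I would substitute the reference observables \eqref{GHZObs}. A direct computation gives $\tilde A_{1,1}=X$ and $\tilde A_{1,0}=Z$ on the first party, so the three operators inside $B_i$ become the GHZ stabilizers $X^{\otimes N}$, $Z_1Z_i$, and, using $Y=iXZ$, $-Y_1Y_i\bigotimes_{j\ne i}X_j=(Z_1Z_i)(X^{\otimes N})$. Each has eigenvalue $+1$ on $\ket{\phi_0}=\ket{\mathrm{GHZ}}$, whence $\langle B_i\rangle=3$ and $\langle\mathcal{B}_0\rangle=3(N-1)$. For general $l$ the relabelling above is implemented by a local unitary $\bigotimes_i V_i^{l_i}$ mapping $\ket{\phi_0}$, up to a global phase, to $\ket{\phi_l}$, so the reference observables together with $\ket{\phi_l}$ saturate the value for every $l$.

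The matching upper bound I would obtain from a sum-of-squares (SOS) decomposition $3(N-1)\,\I-\mathcal{B}_0=\sum_k\gamma_k G_k^\dagger G_k$ with $\gamma_k\ge0$, which makes $\mathcal{B}_0\preceq 3(N-1)\,\I$ manifest and whose saturation conditions $G_k\ket{\psi}=0$ feed directly into Theorem~\ref{theo1m}. The template is the ideal case: there $B_i=O_1+O_2+O_1O_2$ with $O_1=X^{\otimes N}$ and $O_2=Z_1Z_i$ commuting involutions, and $3\I-B_i=(\I-O_1)+(\I-O_2)+(\I-O_1O_2)\succeq0$, each summand a half-square of an involution. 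To promote this to arbitrary observables I would build deficit operators generalizing $\I-\tilde A_{1,1}\otimes P$, $\I-\tilde A_{1,0}\otimes A_{i,0}$ and the product term, one triple per $i$, relying on the CHSH identities $\tilde A_{1,0}^2+\tilde A_{1,1}^2=2\I$ and $\{\tilde A_{1,0},\tilde A_{1,1}\}=0$ together with $A_{i,x}^2=\I$.

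The main obstacle is the construction of this global SOS, for two linked reasons. First, $\tilde A_{1,1}\otimes P$ and $\tilde A_{1,0}\otimes A_{i,0}$ are not involutions, so expanding the naive squares produces surplus operators $\tilde A_{1,1}^2,\tilde A_{1,0}^2$ that no term-by-term estimate can absorb; indeed a crude triangle-inequality bound on a single $\|B_i\|$ already exceeds $3$, so the cancellation can only close after summing all $N-1$ blocks and invoking $\tilde A_{1,0}^2+\tilde A_{1,1}^2=2\I$. Second, the blocks share the first party's operators and the factors $A_{j,1}$, so the decomposition cannot be block-diagonal and one must check that cross terms between distinct $i$ cancel; moreover, enforcing the identity $O_3=O_1O_2$ for general observables forces the SOS to pin the anticommutations $\tilde A_{1,1}\tilde A_{1,0}=-iA_{1,2}$ and $A_{i,1}A_{i,0}=-iA_{i,2}$ that underlie tomographic completeness. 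I expect this bookkeeping, rather than any single estimate, to be the crux; once the right combination of deficit operators (plus CHSH-normalization squares for the first party) is assembled, positivity is immediate and $3(N-1)$ follows, and I would cross-check the final decomposition against the SOS known for the parent inequalities of Refs.~\cite{Flavio,sarkar2023self}.
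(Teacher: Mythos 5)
Your reduction to $l=0$ by sign relabelling is valid, and your verification that the reference observables together with $\ket{\phi_l}$ attain the value $3(N-1)$ is correct (and essentially what the paper does for the achievability half). The problem is the other half: the operator bound $\hat{\mathcal{I}}_l\preceq 3(N-1)\,\I$ is the actual content of the Fact, and your proposal never establishes it. You name the right strategy (an SOS certificate) and the right key identity ($\tilde{A}_{1,0}^2+\tilde{A}_{1,1}^2=A_{1,0}^2+A_{1,1}^2\preceq 2\,\I$), but you stop at a plan, explicitly deferring "the right combination of deficit operators" to future bookkeeping. Until those squares are written down and expanded, there is no proof of the upper bound; this is a genuine gap, not a routine verification.

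Moreover, the obstacles you anticipate are largely illusory, which is why the missing step is much simpler than you fear. The paper's certificate is
\[
2\left[3(N-1)\,\I-\hat{\mathcal{I}}_{l}\right]\;\geq\;(N-1)P_{l_1}^2+\sum_{i=2}^N R_{i,l_1,l_i}^2+\sum_{i=2}^N Q_{i,l_i}^2\;\geq\;0,
\]
with the two-term operators
\[
P_{l_1}=(-1)^{l_1}\tilde{A}_{1,1}-\bigotimes_{i=2}^N A_{i,1},\qquad
R_{i,l_1,l_i}=(-1)^{l_1+l_i}\tilde{A}_{1,0}-A_{i,0},\qquad
Q_{i,l_i}=(-1)^{l_i}A_{1,2}+A_{i,2}\otimes\bigotimes_{\substack{j=2\\ j\neq i}}^{N}A_{j,1},
\]
i.e.\ squares of differences (or sums) of operators acting on disjoint tensor slots, not squares of deficit operators $\I-O$. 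Each square is expanded on its own: its cross term reproduces $(-2)$ times the corresponding group of Bell terms, while all diagonal terms together are bounded by $6(N-1)\,\I$ using only $A_{i,j}^2\preceq\I$ and the CHSH identity above. Consequently: (i) there are no cross terms between distinct blocks $i\neq i'$ to cancel — a plain sum of squares requires no such cancellation; (ii) nothing like $O_3=O_1O_2$ or the anticommutation relations $\tilde{A}_{1,1}\tilde{A}_{1,0}=-iA_{1,2}$, $A_{i,1}A_{i,0}=-iA_{i,2}$ needs to be enforced to get the Tsirelson bound — such relations are extracted only later, in the self-testing argument of Theorem \ref{theorem1}, from the saturation conditions $P_{l_1}\ket{\psi}=R_{i,l_1,l_i}\ket{\psi}=Q_{i,l_i}\ket{\psi}=0$; and (iii) your claim that the cancellation "can only close after summing all $N-1$ blocks" is also off: distributing one copy of $P_{l_1}^2$ to each of your blocks $B_i$ already yields the per-block bound $B_i\preceq 3\,\I$, because the identity $\tilde{A}_{1,0}^2+\tilde{A}_{1,1}^2\preceq 2\,\I$ is available within a single block. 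Had you written down these squares, positivity plus the diagonal bound would have finished the proof in a few lines.
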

\begin{proof}
The following proof is inspired by Refs. \cite{Flavio, sarkarPRL}. Let us first associate a Bell operator to each of the Bell expressions $\mathcal{I}_l$ \eqref{BE1N} of the following form
\begin{equation}\label{BE1Nop}
\mathcal{\hat{I}}_{l}=(N-1)(-1)^{l_1}\tilde{A}_{1,1}\otimes\bigotimes_{i=2}^N A_{i,1}
+\sum_{i=2}^N(-1)^{l_1+l_i}\tilde{A}_{1,0}\otimes A_{i,0}-\sum_{i=2}^{N}(-1)^{l_i} A_{1,2}\otimes A_{i,2}\otimes\bigotimes_{\substack{j=2\\j\ne i}}^{N} A_{j,1}.
\end{equation}
Now, for each $\mathcal{\hat{I}}_l$ we can find the following sum-of-squares (SOS) decomposition,
\begin{eqnarray}\label{SOS1}
   2\left[3(N-1)\,\I-\mathcal{\hat{I}}_{l}\right]\geq(N-1)P_{l_1}^2+\sum_{i=2}^NR_{i,l_1,l_i}^2+\sum_{i=2}^NQ_{i,l_i}^2, 
\end{eqnarray}
where 

\begin{subequations}\label{SOS2}
\begin{equation}\label{SOS2_1}
    P_{l_1}=(-1)^{l_1}\tilde{A}_{1,1}-\bigotimes_{i=2}^N A_{i,1},
\end{equation}
\begin{equation}\label{SOS2_4}
    R_{i,l_1,l_i}=(-1)^{l_1+l_i}\tilde{A}_{1,0}-A_{i,0},
\end{equation}
\begin{equation}\label{SOS2_2}
   Q_{i,l_i}=(-1)^{l_i}A_{1,2}+A_{i,2}\otimes\bigotimes_{\substack{j=2\\j\ne i}}^{N} A_{j,1}.
\end{equation}
\end{subequations}
Notice that by expanding the terms on the right-hand side of the above decomposition \eqref{SOS1} one obtains
\begin{eqnarray}\label{SOS11}
 (N-1)\left(\tilde{A}_{1,1}^2+\tilde{A}_{1,0}^2+A_{1,2}^2\right)+(N-1)\bigotimes_{i=2}^N A_{i,1}^2+\sum_{i=2}^{N} A_{i,2}^2\otimes\bigotimes_{\substack{j=2\\j\ne i}}^{N} A_{j,1}^2+\sum_{i=2}^NA_{i,0}^2-2\ \mathcal{\hat{I}}_{l}\nonumber\\ \quad\leqslant \  2\left[3(N-1)\ \I-\ \mathcal{\hat{I}}_{l}\right],
\end{eqnarray}
where to arrive at the last line we used the fact that $\tilde{A}_{i,j}^2\leqslant \I$. Thus,
one can conclude from \eqref{SOS11} that $3(N-1)$ is an upper bound on $\mathcal{I}_{l}$ when using quantum theory. One can easily verify that the Bell inequalities \eqref{BE1N} attain the value $3(N-1)$ when the states shared among the parties are the GHZ-like state $\ket{\phi_l}$ \eqref{GHZvecs} and the observables given in Eq. (\ref{GHZObs}). Thus, the Tsirelson's bound of the Bell inequalities \eqref{BE1N} for any $l$ is $3(N-1)$.
\end{proof}

An important observation from the above SOS decomposition is that when the maximal violation of the above Bell inequalities is attained, the right-hand side of Eq. \eqref{SOS1} is $0$. Thus, any state $\ket{\psi}$ and observables $A_{i,j}$ that attain the maximal violation of the Bell inequalities \eqref{BE1N} for any $l$ is given by
    \begin{subequations}
    \begin{equation}\label{SOSrel1}
          (-1)^{l_1}\tilde{A}_{1,1}\ket{\psi}=\bigotimes_{i=2}^N A_{i,1}\ket{\psi},
    \end{equation}
         \begin{equation}\label{SOSrel4}
              (-1)^{l_1+l_i}\tilde{A}_{1,0}\ket{\psi}=A_{i,0} \ket{\psi}\qquad (i=2,\ldots,N),
        \end{equation}
         \begin{equation}\label{SOSrel2}
             -(-1)^{l_i}A_{1,2}\ket{\psi}=A_{i,2}\otimes\bigotimes_{\substack{j=2\\j\ne i}}^{N} A_{j,1}\ket{\psi}\qquad (i=2,\ldots,N).
        \end{equation}
    \end{subequations}
Also, one can conclude from Eq. \eqref{SOS11} that the observables that attain the maximum violation of the Bell inequalities \eqref{BE1N} must be unitary, that is, $A_{i,j}^2=\I$. These relations will be particularly useful for self-testing.

\subsection{Self-testing}
\begin{thm}\label{theorem1}
Assume that the Bell inequalities \eqref{BE1N} for any $l$ are maximally violated and each outcome of the central party occurs with probability $\overline{P}(l|0)=1/2^N$ where $N$ denotes the number of external parties. The sources $P_i$ prepare the states $\ket{\psi_{A_iE_i}}\in \mathcal{H}_{A_i}\otimes\mathcal{H}_{E_i}$ for $i=1,2,\ldots,N$, the measurement with the central party is given by $\{R_{l|0}\}$ for $l=l_1l_2\ldots l_N$ such that $l_i=0,1$ which acts on $\bigotimes_{i=1}^N\mathcal{H}_{E_i}$, the  local observables for each of the party is given by $A_{i,0}$, $A_{i,1}$, and $A_{i,2}$ for $i=1,2,\ldots,N$ which act on $\mathcal{H}_{A_i}$. Then,
\begin{enumerate}
    \item The Hilbert spaces of all the parties decompose as $\mathcal{H}_{A_i}=\mathcal{H}_{A_i'}\otimes\mathcal{H}_{A_i''}$ and $\mathcal{H}_{E_i}=\mathcal{H}_{E_i'}\otimes\mathcal{H}_{E_i''}$, where $\mathcal{H}_{A_i'}$ and 
    $\mathcal{H}_{E_i'}$ are qubit Hilbert spaces, whereas $\mathcal{H}_{A_i''}$ and $\mathcal{H}_{E_i''}$
    are some finite-dimensional but unknown auxiliary Hilbert spaces.
    \item There exist local unitary transformations $U_{A_i}:\mathcal{H}_{A_i}\rightarrow(\mathbb{C}^2)_{A'}\otimes\mathcal{H}_{A''_i}$ and $U_{E_i}:\mathcal{H}_{E_i}\rightarrow(\mathbb{C}^2)_{A'}\otimes \mathcal{H}_{E''_i}$ 
    such that the states are given by
\begin{eqnarray}\label{A10}
U_{A_i}\otimes U_{E_i}\ket{\psi_{A_iE_i}}=\ket{\phi^+_{A_i'E'_i}}\otimes\ket{\xi_{A_i''E''_i}}.
\end{eqnarray}
\item The measurement of the central party is
\begin{eqnarray}\label{statest1}
 U_ER_lU_E^{\dagger} =\proj{\phi_l}_{E'}\otimes\I_{E''}\qquad \forall l,
\end{eqnarray}
where $U_E=U_{E_1}\otimes\ldots\otimes U_{E_N}$ and $\ket{\phi_l}$ are given in Eq. \eqref{GHZvecs}. The measurements of all the other parties are given by
\begin{eqnarray}\label{mea1}
U_{A_1}A_{1,0}\,U_{A_1}^{\dagger}&=&\left(\frac{X+Z}{\sqrt{2}}\right)_{A_1'}\otimes\I_{A_1''}, \quad U_{A_1}A_{1,1}\,U_{A_1}^{\dagger}=\left(\frac{X-Z}{\sqrt{2}}\right)_{A_1'}\otimes\I_{A_1''},\nonumber\\
U_{A_i}A_{i,0}\,U_{A_i}^{\dagger}&=&Z_{A_i'}\otimes\I_{A_i''},\quad\quad\ \ \ \ \  U_{A_i}A_{i,1}\,U_{A_i}^{\dagger}=X_{A_i'}\otimes\I_{A_i''}\qquad (i=2,\ldots,N)\label{stmea1}
\end{eqnarray}
and,
\begin{eqnarray}\label{mea2}
    U_{A_i}A_{i,2}\,U_{A_i}^{\dagger}&=&\pm Y_{A_i'}\otimes\I_{A_i''} \qquad\forall i.\label{stmea2}
\end{eqnarray}
\end{enumerate}
\end{thm}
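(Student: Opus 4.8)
\noindent\textit{Proof proposal.---}The plan is to run a self-testing argument in three stages: extract the qubit structure and certify the external observables from the sum-of-squares (SOS) relations; certify the post-measurement states held by the Alices via a swap isometry; and finally use the independence of the sources together with the equiprobability assumption $\overline{P}(l|0)=1/2^N$ to lift this to the statements \eqref{A10} about the individual source states and \eqref{statest1} about Eve's $e=0$ measurement. Throughout, everything is understood up to the global complex-conjugation freedom and the auxiliary ``junk'' factors.

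First I would fix an outcome $l$ and work with the (dilated, hence pure) post-measurement state $\ket{\psi_l}$ that Eve's input $e=0$ prepares on the Alices. Since $\hat{\mathcal I}_l$ acts only on $\mathcal H_A$, maximal violation places $\ket{\psi_l}$ in the top eigenspace, so in the SOS decomposition \eqref{SOS1} every square must annihilate $\ket{\psi_l}$; this yields the relations \eqref{SOSrel1}--\eqref{SOSrel2} and the unitarity $A_{i,j}^2=\I$ on the support. Using \eqref{overA} and $A_{1,j}^2=\I$ one checks directly that $\tilde A_{1,0}$ and $\tilde A_{1,1}$ anticommute on $\ket{\psi_l}$. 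I would then transport this to the remaining parties by combining \eqref{SOSrel1}, \eqref{SOSrel4} and \eqref{SOSrel2}: multiplying each by a suitable single-party observable and repeatedly invoking $A_{i,j}^2=\I$ gives $\{A_{i,0},A_{i,1}\}\ket{\psi_l}=0$ for $i\ge 2$ and identifies $A_{i,2}$ with the product $A_{i,0}A_{i,1}$ up to a phase, i.e.\ with $\pm Y$.

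Next, for each $i$ the operators $A_{i,0},A_{i,1}$ are $\pm1$-valued and anticommute on the support of the local reduced state, so Jordan's lemma provides a local basis in which they are simultaneously block-diagonal with $2\times 2$ blocks $Z$ and $X$. This gives the decomposition $\mathcal H_{A_i}=\mathcal H_{A_i'}\otimes\mathcal H_{A_i''}$ of claim~1 and the unitaries $U_{A_i}$ of claim~2 bringing the observables to the canonical form \eqref{mea1}--\eqref{mea2}, the sign in $\pm Y$ being the complex-conjugation ambiguity. Applying the associated swap isometry $U_A=\bigotimes_i U_{A_i}$ to the stabilizer-type relations \eqref{SOSrel1}--\eqref{SOSrel2} then certifies $U_A\ket{\psi_l}=\ket{\phi_l}_{A'}\otimes\ket{\xi}$ with $\ket{\phi_l}$ the GHZ vector \eqref{GHZvecs}; note that $U_A$ is independent of $l$, as the observables are.

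The crux, and the step I expect to be hardest, is converting this per-outcome certification into \eqref{A10} and \eqref{statest1}, where the network causal constraints enter. Here I would use $\sum_l R_{l|0}=\I$ to write $\rho_A=\sum_l\overline{P}(l|0)\,\tilde\rho^{\,l|0}_A$, so that with the uniform weights and the qubit-marginal of the certified post-measurement states one gets $\tfrac{1}{2^N}\sum_l\proj{\phi_l}=\I/2^N$, because the $2^N$ GHZ vectors $\ket{\phi_l}$ form a complete orthonormal basis of $(\mathbb C^2)^{\otimes N}$. On the other hand, source independence gives $\rho_A=\bigotimes_i\rho_{A_i}$, hence $\bigotimes_i\rho_{A_i'}=\I/2^N$ on the certified qubits; a product of density operators equal to the maximally mixed state forces each factor $\rho_{A_i'}=\I/2$, so every source marginal on its qubit is maximally mixed and, combined with the certified $Z,X$ correlations, each $\ket{\psi_{A_iE_i}}$ is locally equivalent to $\ket{\phi^+}\otimes\ket{\xi}$, proving \eqref{A10}. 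Feeding this maximally entangled form back into the steering identity $\tilde\rho^{\,l|0}_A\propto\Tr_E[(\I_A\otimes R_{l|0})\bigotimes_i\proj{\psi_{A_iE_i}}]=\proj{\phi_l}$ then identifies $R_{l|0}$, through the transpose induced by $\ket{\phi^+}$, with the GHZ projector on the $E'$ qubits up to the same conjugation, establishing \eqref{statest1}. The delicate points are tracking the junk factors uniformly in $l$ and arguing that equiprobability plus completeness of the GHZ basis rigidly pin down both the entanglement of each source and the basis measured by Eve.
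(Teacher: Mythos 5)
Your overall architecture is the same as the paper's: SOS relations give unitarity and anticommutation, a Jordan-type lemma gives the qubit structure and canonical observables, per-outcome analysis certifies GHZ post-measurement states, and equiprobability plus source independence are then invoked for \eqref{A10} and \eqref{statest1}. However, the decisive step of your Stage 3 has a genuine gap. From $\bigotimes_i\rho_{A_i'}=\I/2^N$ you conclude that each certified qubit marginal is $\I/2$, and then assert that this, ``combined with the certified $Z,X$ correlations'', forces $\ket{\psi_{A_iE_i}}\cong\ket{\phi^+}\otimes\ket{\xi}$. That implication is false as stated: a maximally mixed marginal on $A_i'$ does not exclude states such as $\tfrac{1}{\sqrt{2}}\left(\ket{0}_{A_i'}\ket{0}_{A_i''}\ket{e_0}_{E_i}+\ket{1}_{A_i'}\ket{1}_{A_i''}\ket{e_1}_{E_i}\right)$, in which the certified qubit is correlated with the junk sector; such a state is not locally equivalent to $\ket{\phi^+_{A_i'E_i'}}\otimes\ket{\xi_{A_i''E_i''}}$ because its reduced state on $A_i$ is not of the form $\tfrac{\I_{A_i'}}{2}\otimes\sigma_{A_i''}$. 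What must be proven is precisely this product form (decoupling of the certified qubit from $A_i''$); only then does uniqueness of purifications deliver \eqref{A10}. The paper's Part B establishes it via the Schmidt decomposition and the operators $P_{A_i}$: summing \eqref{cond4} over $l$ gives $\sum_l\proj{\phi_l}_{A'}\otimes\tilde{\rho}^{l,T}_{A''}=2^N\bigotimes_iP_{A_i}^2$, and tracing out the \emph{full} systems of the other parties (primed and double-primed) yields $P_{A_i}^2\propto\tfrac{\I_{A_i'}}{2}\otimes\sigma_{A_i''}$, because the single-qubit marginal of every $\proj{\phi_l}$ equals $\I/2$ independently of $l$ and therefore factors out of the sum. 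Your argument can be repaired along the same lines—marginalize the full equation with the junk systems retained, rather than only its qubit part—but as written the step is missing, and the appeal to unspecified ``$Z,X$ correlations'' does not fill it.

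Two further gaps are worth flagging. First, your claim that $U_A$ is independent of $l$ ``as the observables are'' begs the question: the anticommutation you derive holds only on the support of $\rho^l_{A_i}$, which does depend on $l$; the paper upgrades it to the operator identity $\{A_{i,0},A_{i,1}\}=0$ by summing the support projections over $l$ and using the full-rank assumption, $\sum_l\Pi^{A_1}_l>0$. Second, your one-line certification of $A_{i,2}$ in Stage 1 (``the product $A_{i,0}A_{i,1}$ up to a phase'') is unjustified and, in fact, out of order: the Bell relations fix only the action of $A_{i,2}$ on the post-measurement states, and excluding junk components (i.e., $\overline{A}_{i,2}=Y\otimes S_i$ with $S_i\neq\pm\I$, or admixtures of $\I,Z,X$) requires that the states $\tilde{\rho}^l_{A''}$ be full rank—a fact the paper obtains only \emph{after} certifying Eve's measurement, since it follows from $\tilde{R}_l=\I_{A''}$ via \eqref{Rl1}. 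Moreover, the statement \eqref{stmea2} requires the sign of $\pm Y$ to be the \emph{same} for all parties, which is proved through the decomposition $S_i=S_i^+-S_i^-$ and the conditions $S_1^+\otimes S_i^-=S_1^-\otimes S_i^+=0$; attributing the sign to the conjugation freedom without this consistency argument leaves the possibility of mixed signs open.
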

\begin{proof} Before proceeding, let us recall that the state shared among the external parties $A_i$ when the central party $E_i$ gets an outcome $l$ is given by
\begin{eqnarray}\label{54}
\rho^l_{{A}}=\frac{1}{\overline{P}(l|0)}\Tr_{E}\left[\left(\I_{A}\otimes R_l\right)\bigotimes_{i=1}^N\proj{\psi_{A_iE_i}}\right],
\end{eqnarray}
where we denoted $A_1\ldots A_N\equiv A$ and $E_1\ldots E_N\equiv E$. Similar notation will be used for $A'$, $A''$, $E'$, and $E''$. Now, we first find the states $\rho^l_{{A}}$ for any $l$ and characterise the measurements $A_{i,j}$ for any $i$ and $j=0,1$. Then, using these derived states, we certify the states generated by the sources $\ket{\psi_{A_iE_i}}$ for all $i$. Then, using both of these results we certify the measurement $\{R_{l|0}\}$. For simplicity, we represent $R_{l|0}\equiv R_l$ in the below proof. Finally, using all the derived results, we characterise the observables $A_{i,2}$.\\
\begin{center}
    \textbf{A.\ \  Post-measurement states $\rho^l_{{A}}$}
\end{center}
To characterize the post-measurement states $\rho^l_{{A}}$ based on violation of our Bell inequalities we need to consider the relations \eqref{SOSrel1} and \eqref{SOSrel4} obtained from the sum of squares decomposition of the corresponding Bell operators. First, let us consider a purification of the state $\rho^l_{{A}}$ which is obtained by adding an ancillary system $G$ such that 
\begin{eqnarray}
   \rho^l_{{A}}=\Tr_G\left(\proj{\psi_l}_{{AG}}\right).
\end{eqnarray}
In what follows, for simplicity, we drop the subscript $AG$ from the state $\ket{\psi_l}_{AG}$. 

Now, for each outcome of the measurement performed by the central party, the corresponding Bell inequalities \eqref{BE1N} are maximally violated by the post-measurement states $\ket{\psi_l}$. Thus, these states and the observables of every party must satisfy the relations \eqref{SOSrel1} and \eqref{SOSrel4}, which we state as:
\begin{subequations}
    \begin{align}
    (-1)^{l_1}\frac{A_{1,0}+A_{1,1}}{\sqrt{2}}\ket{\psi_l} &=\bigotimes_{i=2}^N A_{i,1}\ket{\psi_l} \label{SOSrel1p},\\
    (-1)^{l_1+l_i}\frac{A_{1,0}-A_{1,1}}{\sqrt{2}}\ket{\psi_l} &=A_{i,0}\ket{\psi_l} \label{SOSrel2p}
    \end{align}
\end{subequations}
for any $l=l_1l_2\ldots l_N$ and $i=2,3,\ldots,N$. Notice that any measurement can only be characterised on the support of the local states of $\ket{\psi_l}$. Thus, let us denote by $\Pi_l^{A_1}$ the projection onto the support of $\rho_{A_1}^l=\Tr_{A\setminus A_1}(\rho_A^l)$ and apply it to the above equations to obtain:
\begin{subequations}
    \begin{align}
    (-1)^{l_1}\frac{\mathbb{A}_{1,0}^{(l)}+\mathbb{A}_{1,1}^{(l)}}{\sqrt{2}}\ket{\psi_l} &=\bigotimes_{i=2}^N A_{i,1}\ket{\psi_l},\label{SOS1proj}\\
    (-1)^{l_1+l_i}\frac{\mathbb{A}_{1,0}^{(l)}-\mathbb{A}_{1,1}^{(l)}}{\sqrt{2}}\ket{\psi_l} &=A_{i,0}\ket{\psi_l}\qquad (i=2,3,\ldots,N),\label{SOS2proj}
    \end{align}
\end{subequations}
where we introduce the notation $\mathbb{A}_{1,j}^{(l)}=\Pi_l^{A_1} A_{1,j} \Pi_l^{A_1}$ for $j\in\{0,1\}$. Observe that $\Pi_l^{A_1}\ket{\psi_l}=\ket{\psi_l}$.

Let us now recall that to achieve the maximal violation of the Bell inequalities \eqref{BE1N} the observables must be unitary, that is, $A_{i,j}^2=\I$. Now, consider the relations \eqref{SOS1proj} and \eqref{SOS2proj}, and multiply them by $\bigotimes_{i=2}^N A_{i,1}$ and $A_{i,0}$, respectively. By using the same relations on themselves again, we obtain that:
\begin{subequations}
\begin{align}
    \left(\mathbb{A}_{1,0}^{(l)}+\mathbb{A}_{1,1}^{(l)}\right)^2\ket{\psi_l} &= 2\ket{\psi_l}; \label{SOSrel3p}\\
    \left(\mathbb{A}_{1,0}^{(l)}-\mathbb{A}_{1,1}^{(l)}\right)^2\ket{\psi_l} &= 2\ket{\psi_l}. \label{SOS2rel3p}
\end{align}
\end{subequations}
We can assume that the local states $\rho_{A_1}$ are full-rank and thus invertible. Therefore, if we take a partial trace over all the other subsystems except the first one, we obtain from \eqref{SOSrel3p} and \eqref{SOS2rel3p} that
\be
   \left(\mathbb{A}_{1,0}^{(l)}+\mathbb{A}_{1,1}^{(l)}\right)^2=2\Pi_l^{A_1}
\ee
and
\be
   \left(\mathbb{A}_{1,0}^{(l)}-\mathbb{A}_{1,1}^{(l)}\right)^2=2\Pi_l^{A_1},
\ee
which results in
\be\label{antico_proj}
\left\{\mathbb{A}_{1,0}^{(l)},\mathbb{A}_{1,1}^{(l)}\right\}=0.
\ee
Also, because $\left(\mathbb{A}_{1,i}^{(l)}\right)^2\leqslant \Pi_l^{A_1}$, the above equations imply that $\left(\mathbb{A}_{1,i}^{(l)}\right)^2=\Pi_l^{A_1}$ for $i\in\{0,1\}$. This means 
that $\mathbb{A}_{1,i}^{(l)}$ are unitary on the support of $\rho_{A_1}^l$ and and a consequence, 
the observables $A_{1,i}$ can be represented as direct sums,
\begin{equation}
    A_{1,i}=\mathbb{A}_{1,i}^{(l)}\oplus F_{i}^{(l)},
\end{equation}
where $F_{i}^{(l)}$ act on the complement of $\mathrm{supp}(\rho_{A_1}^{l})$ in $\mathcal{H}_{A_1}$.
This means that the anticommutation relations \eqref{antico_proj} can also be stated as
\begin{equation}
\left\{\mathbb{A}_{1,0}^{(l)},\mathbb{A}_{1,1}^{(l)}\right\}= \Pi_l^{A_1}\left\{A_{1,0},A_{1,1}\right\}=0,
\end{equation}
which allows us to conclude that
\begin{equation}
    \sum_{l}\Pi_l^{A_1}\left\{A_{1,0},A_{1,1}\right\}=0.
\end{equation}

Let us now show that $\sum_l \Pi_l^{A_1}>0$, that is, the sum of all projections $\Pi_{l}^{A_1}$ is invertible. To see this, suppose that $\sum_l \Pi_l^{A_1}\ngtr 0$. That implies that exists $\ket{\phi}$ such that $\sum_l \Pi_l \ket{\phi}=0$. Thus, for every $l$, we must have $\Pi_l\ket{\phi}=0$, which means that $\ket{\phi}$ belongs to the kernel of $\rho_{A_1}^{l}$, i.e., $\rho_{A_1}^{l}\ket{\phi}=0$. By summing over $l$, we obtain $\rho_{A_1}\ket{\phi}=0$, which contradicts the assumption that the local states of $\ket{\psi_{A_iE_i}}$ are full rank. Therefore, we must have
\be\label{sum_Pi_l}
\sum_l \Pi_l^{A_1}>0,
\ee
which implies that 
%
\be
\{A_{1,0},A_{1,1}\}=0.
\ee

As proven in e.g. Ref. \cite{Jed1}, for a pair of unitary observables with eigenvalues $\pm 1$ that anti-commute, there exist a local unitary $V_{A_1}:\mathcal{H}_{A_1}\rightarrow\mathcal{H}_{A_1}$ such that 
\begin{eqnarray}
  V_{A_1}A_{1,0}V_{A_1}^{\dagger}&=&Z_{A_1'}\otimes\I_{A_1''},\qquad  V_{A_1}A_{1,1}V_{A_1}^{\dagger}=X_{A_1'}\otimes\I_{A_1''}.
\end{eqnarray}
Now, we can always apply a unitary $V':\mathcal{H}_{A_1'}\rightarrow\mathcal{H}_{A_1'}$ such that $V'ZV'^{\dagger}=(X+Z)/\sqrt{2}$ and $V'XV'^{\dagger}=(X-Z)/\sqrt{2}$. Thus, denoting $U_{A_1}=(V'\otimes\mathbbm{1}_{A_1''})\,V_{A_1}$, we obtain that there exists a unitary operation $U_{A_1}:\mathcal{H}_{A1}\to (\mathbbm{C}^2)_{A_1'}\otimes \mathcal{H}_{A_1''}$ for which,
\begin{eqnarray}\label{A1}
U_{A_1}A_{1,0}\,U_{A_1}^{\dagger}&=&\frac{X+Z}{\sqrt{2}}\otimes\I_{A_1''}, \qquad U_{A_1}A_{1,1}\,U_{A_1}^\dagger=\frac{X-Z}{\sqrt{2}}\otimes\I_{A_1''}.
\end{eqnarray}

Now, let us derive the form of the observables $A_{i,0}$ and $A_{i,1}$ of all the other parties. For this purpose, let us first consider the $k$-th party and then use the derived observables \eqref{A1} to rewrite the relations \eqref{SOSrel1p} and \eqref{SOSrel2p} as
\begin{eqnarray}\label{SOSrel5p}
   (-1)^{l_1}X_{A_1'} \ket{\psi'_l}=\mathbb{A}_{k,1}^{(l)}\otimes\bigotimes_{i=2\setminus\{k\}}^N A_{i,1}\ket{\psi'_l}
\end{eqnarray}
and,
\begin{eqnarray}\label{SOSrel6p}
   (-1)^{l_1+l_k}Z_{A_1'}\ket{\psi'_l}=\mathbb{A}_{k,0}^{(l)}\ket{\psi'_l},
\end{eqnarray}
where $\ket{\psi'_l}=U_{A_1}\ket{\psi_l}$, and $\mathbb{A}_{k,i}^{(l)}=\Pi_{l}^{A_k}A_{k,i}\, \Pi_{l}^{A_k}$ for $i\in\{0,1\}$, is the projected operator defined in analogous way as before. Now, multiplying \eqref{SOSrel5p} with $Z_{A_1'}$ and then using \eqref{SOSrel6p} on the right-hand side of the obtained expression, we get
\begin{eqnarray}\label{SOSrel7p}
    (-1)^{l_k}(ZX)_{A_1'} \ket{\psi'_l}=\mathbb{A}_{k,1}^{(l)}\mathbb{A}_{k,0}^{(l)}\otimes\bigotimes_{i=2\setminus\{k\}}^N A_{i,1}\ket{\psi'_l}.
\end{eqnarray}
Now, multiplying \eqref{SOSrel6p} with $X_{A_1'}$ and then using \eqref{SOSrel5p} on the right-hand side of the obtained expression, we get
\begin{eqnarray}\label{SOSrel8p}
    (-1)^{l_k}(XZ)_{A_1'} \ket{\psi'_l}=\mathbb{A}_{k,0}^{(l)}\mathbb{A}_{k,1}^{(l)}\otimes\bigotimes_{i=2\setminus\{k\}}^N A_{i,1}\ket{\psi'_l}.
\end{eqnarray}
Adding \eqref{SOSrel7p} and \eqref{SOSrel8p} and using the fact that the Pauli matrices satisfy $ZX+XZ=0$, we finally obtain that
\begin{eqnarray}
 \left \{\mathbb{A}_{k,0}^{(l)},\mathbb{A}_{k,1}^{(l)}\right\}\otimes\bigotimes_{i=2\setminus\{k\}}^N A_{i,1}\ket{\psi'_l}=0.
\end{eqnarray}
Now, using the fact that the observables $A_{i,j}$ are unitary, we get that
\begin{eqnarray}
   \left\{\mathbb{A}_{k,0}^{(l)},\mathbb{A}_{k,1}^{(l)}\right\}=0\qquad (k=2,\ldots,N).
\end{eqnarray}
By applying $X_{A_1'}$ and $Z_{A_1'}$ to Eqs. \eqref{SOSrel5p} and \eqref{SOSrel6p}, respectively, we obtain that $(\mathbb{A}_{k,i}^{(l)})^2=\Pi_l^{A_k}$. Analogously to the previous argument, we can conclude that
\be
\left\{A_{k,0},A_{k,1}\right\}=0\qquad (k=2,\ldots,N).
\ee
Again, using the result presented in the appendix of \cite{Jed1}, we observe that there exists unitaries $U_{A_i}:\mathcal{H}_{A_i}\rightarrow\mathcal{H}_{A_i}$ such that
\begin{eqnarray}\label{An}
   U_{A_k}A_{k,0}\,U_{A_k}^{\dagger}=Z_{A_i'}\otimes\I_{A_i''},\qquad  U_{A_k}A_{k,1}\,U_{A_k}^{\dagger}=X_{A_i'}\otimes\I_{A_i''}\qquad (k=2,\ldots,N).
\end{eqnarray}

Now, let us characterise the states $\rho^l_{{A}}$ for any $l$. For this purpose, we again consider the relations \eqref{SOSrel1} and \eqref{SOSrel4} and substitute into them the derived observables from \eqref{A1} and \eqref{An}. This gives us
\begin{eqnarray}\label{SOSrel10}
   (-1)^{l_1} \bigotimes_{i=1}^N X_{A_i'}\ket{\tilde{\psi}_l}=\ket{\tilde{\psi}_l}
\end{eqnarray}
and,
\begin{eqnarray}\label{SOSrel11}
   (-1)^{l_1+l_i}Z_{A_1'}\otimes Z_{A_i'}\ket{\tilde{\psi}_l}=\ket{\tilde{\psi}_l}\qquad (i=2,\ldots,N),
\end{eqnarray}
where $\ket{\tilde{\psi}_l}=(U_{A_1}\otimes\ldots\otimes U_{A_N} )\ket{\psi_l}$. As concluded above, the local Hilbert spaces $\mathcal{H}_{A_i}$ are even-dimensional, and therefore any state $\ket{\tilde{\psi}_l}$ belonging to the Hilbert space $\bigotimes_{i=1}^N\mathcal{H}_{A_i}\otimes\mathcal{H}_G$ can be written as
\begin{eqnarray}\label{genstate4}
   \ket{\tilde{\psi}_l}=\sum_{i_1,i_2,\ldots i_N=0,1}\ket{i_1i_2\ldots i_N}_{A'}\ket{\phi^l_{i_1i_2\ldots i_N}}_{A''G},
\end{eqnarray}
where $\ket{\phi^l_{i_1i_2\ldots i_N}}_{A''G}$ are some unknown (in general unnormalized) 
states corresponding to the auxiliary degrees of freedom $A_i''$ and the system $G$
which is responsible for purification.

Putting the state \eqref{genstate4} into the conditions \eqref{SOSrel11}, we obtain
\begin{eqnarray}
   (-1)^{l_1+l_k}\sum_{i_1,i_2,\ldots i_N=0,1}(-1)^{i_1+i_k}\ket{i_1i_2\ldots i_N}\ket{\phi^l_{i_1i_2\ldots i_N}}=\sum_{i_1,i_2,\ldots i_N=0,1}\ket{i_1i_2\ldots i_N}\ket{\phi^l_{i_1i_2\ldots i_N}},
\end{eqnarray}
where $\ket{\phi^l_{i_1i_2\ldots i_N}}$ are in general unnormalised. Projecting the above formula on $\bra{i_1i_2\ldots i_N}$, we get
\begin{eqnarray}\label{ststate2}
   (-1)^{l_1+l_k}(-1)^{i_1+i_k}\ket{\phi^l_{i_1i_2\ldots i_N}}=\ket{\phi^l_{i_1i_2\ldots i_N}}\qquad (k=2,\ldots,N).
\end{eqnarray}
From the above condition, we can conclude that $\ket{\phi^l_{i_1i_2\ldots i_N}}=0$ whenever $l_1+l_k+i_1+i_k$ mod $2=1$ for any $k=2,3,\ldots,N$. Thus, the state \eqref{genstate4} that satisfies the condition \eqref{ststate2} is given by
\begin{eqnarray}\label{genstate5}
   \ket{\tilde{\psi}_l}=\ket{l_1\ldots l_N}\ket{\phi_{l_1\ldots l_N}}+\ket{\overline{l}_1\ldots \overline{l}_N}\ket{\phi_{\overline{l}_1\ldots \overline{l}_N}},
\end{eqnarray}
where $l_i=0,1$ for any $i=1,2,\ldots,N$ and $\overline{l}_i$ denotes the negation of the bit $l_i$. Now, putting this state \eqref{genstate5} to the condition \eqref{SOSrel10}, we obtain that
\begin{eqnarray}
   (-1)^{l_1}\left[\ket{\overline{l}_1\ldots \overline{l}_N}\ket{\phi_{l_1\ldots l_N}}+ \ket{l_1\ldots l_N}\ket{\phi_{\overline{l}_1\ldots \overline{l}_N}}\right]=\ket{l_1\ldots l_N}\ket{\phi_{l_1\ldots l_N}}+\ket{\overline{l}_1\ldots \overline{l}_N}\ket{\phi_{\overline{l}_1\ldots \overline{l}_N}}.\nonumber\\
\end{eqnarray}
This implies that
\begin{eqnarray}
 \ket{\phi_{\overline{l}_1\ldots \overline{l}_N}}= (-1)^{l_1}\ket{\phi_{l_1\ldots l_N}},
\end{eqnarray}
which after substituting into Eq. \eqref{genstate5} allows us to conclude that 
the state $\ket{\tilde{\psi}_l}$ is given by
\begin{eqnarray}
   \ket{\tilde{\psi}_l}=\frac{1}{\sqrt{2}}\left(\ket{l_1l_2\ldots l_N}+(-1)^{l_1}\ket{\overline{l}_1\ldots \overline{l}_N}\right)_{A'}\otimes \ket{\phi_{l}}_{A''G},
\end{eqnarray}
where we have also included the normalization factor.
Tracing the $G$ subsystem out, we finally obtain 
\begin{eqnarray}\label{statecertified1}
\left(\bigotimes_{i=1}^N U_{A_i} \right)\rho^l_{{A}}\left(\bigotimes_{i=1}^N U_{A_i}^{\dagger} \right)=\proj{\phi_l}_{A'}\otimes\tilde{\rho}^l_{A''}.
\end{eqnarray}
where $\tilde{\rho}^l_{A''}$ denotes the auxiliary state acting on $\bigotimes_{i=1}^N\mathcal{H}_{A_i''}$. 

Putting it back into Eq. \eqref{54} and taking the unitaries to the right-hand side, we see that
\begin{eqnarray}\label{cond1}
\proj{\phi_l}_{A'}\otimes\tilde{\rho}^l_{A''}=2^N\Tr_{E}\left[\left(\I_{A}\otimes R_l\right)\bigotimes_{i=1}^N\proj{\tilde{\psi}_{A_iE_i}}\right],
\end{eqnarray}
where $\ket{\tilde{\psi}_{A_iE_i}}= U_{A_i}\ket{\psi_{A_iE_i}}$ and we used the fact that $\overline{P}(l|0)=1/2^N$.
\begin{center}
    \textbf{B.\ \ States generated by the source $\ket{\psi_{A_iE_i}}$}
\end{center}

Our aim now is to characterize the states produced by the sources $\ket{\psi_{A_iE_i}}$ and to this aim we follow the approach of Ref. \cite{sarkarPRL}. First, with the aid of the Schmidt decomposition each state $\ket{\tilde{\psi}_{A_iE_i}}= U_{A_i}\ket{\psi_{A_iE_i}}$ can be expressed in the following way
\begin{eqnarray}\label{Schmidt}
\ket{\tilde{\psi}_{A_iE_i}}=\sum_{j=0}^{d_{i}-1}\alpha_j\ket{e_j}_{A_i}\ket{f_j}_{E_i}
\end{eqnarray}
such that $d_{i}$ denotes the dimension of the Hilbert space $\mathcal{H}_{A_i}$ and $\{\ket{e_j}_{A_i}\}$ and $\{\ket{f_j}_{E_i}\}$ denotes the local orthonormal bases of the subsystem $A_i$ and $E_i$, respectively. Notice that, as already proven, the dimension $d_{i}$ is even for any $i$ as the Hilbert space of any external party decomposes as $\mathcal{H}_{A_i}=\mathbbm{C}^2\otimes\mathcal{H}_{A_i''}$. Moreover, the Schmidt coefficients satisfy $\alpha_j> 0$ and $\sum_j\alpha_j^2=1$.

Let us now consider a unitary operation $U_{E_i}:\mathcal{H}_{E_i}\rightarrow\mathcal{H}_{E_i}$ such that $U_{E_i}\ket{f_j}=\ket{e_j^*}_{E_i}$. After applying it to the state \eqref{Schmidt}, one arrives at
\begin{eqnarray}\label{genstate2}
\ket{\overline{\psi}_{A_iE_i}}= U_{E_i}\ket{\tilde{\psi}_{A_iE_i}}=\sum_{j=0}^{d_{i}-1}\alpha_j\ket{e_j}_{A_i}\ket{e_j^*}_{E_i}
\end{eqnarray}
Let us now define a matrix $P_{E_i}$ of rank $d_{i}$ as,
\begin{eqnarray}\label{P}
P_{E_i}=\sum_{j=0}^{d_{i}-1}\alpha_j\proj{e_j^*},
\end{eqnarray}
using which we rewrite the state \eqref{genstate2} as
\begin{eqnarray}\label{genstate3}
\ket{\overline{\psi}_{A_iE_i}}=\sqrt{d_i}\ \I_{A_i}\otimes P_{E_i}\ket{\phi^{+,d_{i}}_{A_iE_i}}\qquad \forall i.
\end{eqnarray}
Here $\ket{\phi^{+,d_{i}}_{A_iE_i}}$ denotes the maximally entangled state of local dimension $d_{i}$, that is,
\begin{eqnarray}
\ket{\phi^{+,d_{i}}_{A_iE_i}}=\frac{1}{\sqrt{d_{i}}}\sum_{j=0}^{d_{i}-1}\ket{e_j}_{A_i}\ket{e_j^*}_{E_i}=\frac{1}{\sqrt{d_{i}}}\sum_{j=0}^{d_{i}-1}\ket{j}_{A_i}\ket{j}_{E_i}.
\end{eqnarray}
Putting it back into the condition \eqref{cond1}, we get that
\begin{eqnarray}\label{cond2}
\proj{\phi_l}_{A'}\otimes\tilde{\rho}^l_{A''}=2^N\Tr_{E}\left[\left(\I_{A}\otimes \overline{R}_{l,E}\right)\bigotimes_{i=1}^N\proj{\phi^{+,d_{i}}_{A_iE_i}}\right]
\end{eqnarray}
where 
\begin{eqnarray}\label{rl}
\overline{R}_{l,E}=D_N\left(\bigotimes_{i=1}^NP_{E_i}U_{E_i}\right)\ R_l\ \left(\bigotimes_{i=1}^N U_{E_i}^{\dagger}P_{E_i}\right),
\end{eqnarray}
such that $D_N=\Pi_{i=1}^N d_{i}$.
Now, notice that the state $\bigotimes_{i=1}^N\ket{\phi^{+,d_{i}}_{A_iE_i}}$ can also be expressed as
a singe maximally entangled state between the external parties $A$ and $E$ of the local dimension $D_N$,
\begin{eqnarray}\label{MaxEntDN}
\bigotimes_{i=1}^N\ket{\phi^{+,d_{i}}_{A_iE_i}}=\ket{\phi^{+,D_N}_{AE}},
\end{eqnarray}
where
\begin{eqnarray}
\ket{\phi^{+,D_N}_{A|E}}=\frac{1}{\sqrt{D_N}}\sum_{j=0}^{D_N-1}\ket{j}_{A}\ket{j}_{E}.
\end{eqnarray}
Plugging this state back to the condition \eqref{cond2} and then using the well-known property of the maximally entangled state that $\I_A\otimes Q_B\ket{\phi^{+,D}_{AB}}=Q^T_A\otimes\I_B\ket{\phi^{+,D}_{AB}}$ for any matrix $Q$, we have that
\begin{eqnarray}
\proj{\phi_l}_{A'}\otimes\tilde{\rho}^l_{A''}=2^N\Tr_{E}\left[\overline{R}^T_{l,A}\otimes\I_{E}\proj{\phi^{+,D_N}_{A|E}}\right].
\end{eqnarray}
After taking the trace over $E$, the above formula simplifies to 
\begin{eqnarray}\label{cond3}
\proj{\phi_l}_{A'}\otimes\tilde{\rho}^l_{A''}=\frac{2^N}{D_N}\overline{R}^T_{l,A}.
\end{eqnarray}
We can now apply the transposition to both sides of the above relations and then expand its right-hand side using \eqref{rl}, to obtain 
\begin{eqnarray}\label{cond4}
\proj{\phi_l}_{A'}\otimes\tilde{\rho}^{l,T}_{A''}=2^N\left(\bigotimes_{i=1}^NP_{A_i}U_{{A_i}}\right)\ R_l\ \left(\bigotimes_{i=1}^NU_{{A_i}}^{\dagger} P_{{A}_i}\right),
\end{eqnarray}
which after taking the sum over $l$ and using the fact that the measurement operators $R_l$ sum up to the identity, leads further to
\begin{eqnarray}\label{Raimat}
\sum_{l=0}^{2^N-1}\proj{\phi_l}_{A'}\otimes\tilde{\rho}^{l,T}_{A''}=2^N\bigotimes_{i=1}^NP_{A_i}^2.
\end{eqnarray}
Now, taking a partial trace over the subsystems $A_2,A_3,\ldots,A_N$ in the above expression, we obtain
\begin{eqnarray}
\frac{\I_{A_1'}}{2}\otimes\sigma_{A_1''}=P_{A_1}^2,
\end{eqnarray}
which by virtue of the fact that $P_{A_i}\geq 0$ for any $i$, implies that 
\begin{eqnarray}
 \I_{A_1'}\otimes\sqrt{\frac{\sigma_{A_1''}}{2}}=P_{A_1}
\end{eqnarray}
where,
\begin{eqnarray}
 \sigma_{A_1''}=\frac{1}{2^N}\sum_{l=0}^{2^N-1}\Tr_{A_2''\ldots A_N''}\left(\tilde{\rho}^{l,T}_{A''}\right).
\end{eqnarray}
In a similar way we can determine from Eq. (\ref{Raimat}) the form of the other matrices $P_{A_{i}}$
and thus 
\begin{eqnarray}\label{Pj}
P_{A_i}=\I_{A_i'}\otimes\sqrt{\frac{\sigma_{A_i''}}{2}}\qquad (i=1,\ldots,N),
\end{eqnarray}
where,
\begin{eqnarray}\label{sigma1}
 \sigma_{A_j''}=\frac{1}{2^N}\sum_{l=0}^{2^N-1}\Tr_{A''\setminus\{A_{j}''\}}\left(\tilde{\rho}^{l,T}_{A''}\right).
\end{eqnarray}
Plugging $P_{A_j}$ in the state \eqref{genstate3} and also recalling that $d_{i}=2d'_i$, we obtain 
\begin{eqnarray}\label{66}
 \ket{\overline{\psi}_{A_iE_i}}=\sqrt{d'_{i}}\ \left(\I_{A_iE_i'}\otimes\sqrt{\sigma_{E_i''}}\right) \ket{\phi^{+,d_{i}}_{A_iE_i}}\qquad \forall i.
\end{eqnarray}
Using then again the fact that $d_i$ is even, which implies that the maximally entangled state
$\ket{\phi^{+,d_{i}}_{A_iE_i}}$ can also be written as
\begin{equation}
\ket{\phi^{+,d_{i}}_{A_iE_i}}=\ket{\phi^{+}_{A_i'E_i'}}\ket{\phi^{+,d'_{i}}_{A_i''E_i''}}, 
\end{equation}
we finally conclude that there are unitary operations $U_{A_i}$ and $U_{E_i}$ such that
\begin{eqnarray}
  (U_{A_i}\otimes U_{E_i})\ket{\psi_{A_iE_i}}=\ket{\overline{\psi}_{A_iE_i}}=\ket{\phi^{+}_{A_i'E_i'}}\ket{\xi_{A_i''E_i''}}\qquad \forall i.
\end{eqnarray}
where the auxiliary state $\ket{\xi_{A_i''E_i''}}$ is given by,
\begin{eqnarray}\label{junkst1}
  \ket{\xi_{A_i''E_i''}}= \left(\I_{A_i''}\otimes\sqrt{d'_i\sigma_{E_i''}}\right) \ket{\phi^{+,d'_{i}}_{A''_iE''_i}}.
\end{eqnarray}

\begin{center}
    \textbf{C.\ \ Entangled measurement  $\{R_l\}$}
\end{center}

Let us now concentrate on the measurement $\{R_l\}$. 
To this end, we exploit the explicit form of $P_{A_i}$ and fact that the matrices $\sigma_{A_{i}''}$ are invertible (which is a consequence of the assumption that the states produced by the sources are locally full rank) to rewrite Eq. (\ref{cond4}) as
%
%
%
\begin{eqnarray}
\left(\bigotimes_{i=1}^NU_{{A_i}}\right)\ R_l\ \left(\bigotimes_{i=1}^N U_{{A_i}}^{\dagger}\right)=\proj{\phi_l}_{A'}\otimes\left(\bigotimes_{i=1}^N\sigma^{-1/2}_{A_i''}\right)\tilde{\rho}^{l,T}_{A''}\left(\bigotimes_{i=1}^N\sigma^{-1/2}_{A_i''}\right),
\end{eqnarray}
and further in a simpler way as
\begin{eqnarray}\label{Rl}
 \left(\bigotimes_{i=1}^N U_{{A_i}}\right)\ R_l\  \ \left(\bigotimes_{i=1}^NU_{{A_i}}^{\dagger}\right)=\proj{\phi_l}_{A'}\otimes(\tilde{R}_{l})_{A''},
\end{eqnarray}
where the matrices $\tilde{R}_{l}$ acting on the $A_{i}''$ subsystems are defined as
\begin{eqnarray}\label{Rl1} \tilde{R}_{l,A''}=\left(\bigotimes_{i=1}^N\sigma^{-1/2}_{A_i''}\right)\tilde{\rho}^{l,T}_{A''}\left(\bigotimes_{i=1}^N\sigma^{-1/2}_{A_i''}\right).
\end{eqnarray}
%

Now, an important step is to observe that after implementing the fact that $\sum_lR_l=\I$, Eq. (\ref{Rl}) allows us to conclude that
\begin{eqnarray}
  \I_{A}=\sum_{l=0}^{2^N-1}\proj{\phi_l}_{A'}\otimes(\tilde{R}_{l})_{A''}.
\end{eqnarray}
Then, since the vectors $\ket{\phi_l}$ are mutually orthogonal, 
the above directly implies that 
%
%
\begin{eqnarray}\label{Rl2}
  \tilde{R}_{l}=\I_{A''}
\end{eqnarray}
using which we finally obtain from Eq. \eqref{Rl} that the measurement operators $R_l$ are up to a local unitary transformation given by
\begin{eqnarray}
  \left(\bigotimes_{i=1}^N U_{{A_i}}\right)\ R_l\  \ \left(\bigotimes_{i=1}^NU_{{A_i}}^{\dagger}\right)=\proj{\phi_l}_{A'}\otimes\I_{A''}\qquad\forall l.
\end{eqnarray}
%
\\

\begin{center}
    \textbf{D.\ \ Measurements $A_{i,2}$}
\end{center}

Let us finally provide a self-testing statement for the third measurement of the external parties $A_{i,2}$ for any $i$. For this purpose, we consider Eq. \eqref{SOSrel2} and substitute into it the certified observables and states from Eqs. \eqref{mea1} and \eqref{statecertified1} to obtain a set of equations
\begin{equation}\label{SOSrel22}
           \left[ -(-1)^{l_i}\overline{A}_{1,2}\overline{A}_{i,2}\otimes \bigotimes_{\substack{j=2\\j\ne i}}^{N} X_{A_j'}\right]\proj{\phi_l}_{A'}\otimes\tilde{\rho}^l_{A''}=\proj{\phi_l}_{A'}\otimes\tilde{\rho}^l_{A''}\qquad i=2,\ldots,N.
\end{equation}
where $\overline{A}_{i,2}=U_{A_i}A_{i,2}\,U_{A_i}^{\dagger}$. As the local Hilbert spaces are even-dimensional, we can express the observables $\overline{A}_{i,2}$ as
\begin{eqnarray}\label{genmea2}
    \overline{A}_{i,2}=\I\otimes P_{i}+Z\otimes Q_{i}+X\otimes R_{i}+Y\otimes S_{i},
\end{eqnarray}
where $P_{i}, Q_{i}, R_{i}, S_{i}$ are hermitian matrices that mutually commute and satisfy $P_{i}^2+Q_{i}^2+R_{i}^2+S_{i}^2=\I$ for any $i$.

Let us first look at the case when $N=2$. Putting the form of the measurements \eqref{genmea2} into the relation \eqref{SOSrel22} and then sandwiching it with $\ket{\phi_l}$, we obtain  
\begin{eqnarray}\label{neweq31}
     \left[P_{1}\otimes P_{2}+(-1)^{l_1+l_2}Q_{1}\otimes Q_{2}+(-1)^{l_1}R_{1}\otimes R_{2} -(-1)^{l_2}S_{1}\otimes S_{2}\right]\tilde{\rho}^l_{A_1''A_2''}= -(-1)^{l_2}\tilde{\rho}^l_{A_1''A_2''}
\end{eqnarray} 
for any $l_1,l_2=0,1$. Let us now observe that from 
Eq. (\ref{Rl1}) and the fact that $\tilde{R}_l=\mathbbm{1}_{A''}$
it follows that the matrices $\tilde{\rho}^l_{A_1''A_2''}$ are full rank
and therefore the above implies the following four matrix equations 
\begin{eqnarray}\label{neweq3}
     P_{1}\otimes P_{2}+(-1)^{l_1+l_2}Q_{1}\otimes Q_{2}+(-1)^{l_1}R_{1}\otimes R_{2} -(-1)^{l_2}S_{1}\otimes S_{2}= -(-1)^{l_2}\mathbbm{1},
\end{eqnarray}
we can directly be solved to obtain
\begin{equation}\label{neweq4}
    S_{1}\otimes S_{2}=\I,\qquad  P_{1}\otimes P_{2}= Q_{1}\otimes Q_{2}= 
    R_{1}\otimes R_{2}=0.
\end{equation}

Now, considering the general case $N\geq 3$ and again putting form of the measurements \eqref{neweq3} into \eqref{SOSrel22}, 
and sandwiching the above expression 
with the state $\ket{\phi_l}$, we obtain for any $l_i=0,1$ 
\begin{eqnarray}
     \left[(-1)^{l_1}R_{1}\otimes R_{i} -(-1)^{l_i}S_{1}\otimes S_{i}\right]\tilde{\rho}^l_{A''}= -(-1)^{l_i}\tilde{\rho}^l_{A''}\qquad (i=2,\ldots,N)
\end{eqnarray}
As $\tilde{\rho}^l_{A''}$ is full-rank, the above translates to 
\begin{eqnarray}\label{neweq2}
    (-1)^{l_1}R_{1}\otimes R_{i} -(-1)^{l_i}S_{1}\otimes S_{i}= -(-1)^{l_i}\I.
\end{eqnarray}
For any $i$, we get two different equations corresponding to two values of $l_1=0$ and $l_1=1$, which directly imply
\begin{equation}\label{neweq5}
    S_{1}\otimes S_{i}=\I,\qquad  
    R_{1}\otimes R_{i}=0 \qquad (i=2,\ldots,N).
\end{equation}
Now, one notices that $S_{1}^2\otimes S_{i}^2=\I$ which together with the upper bound $S_i^2\leq \mathbbm{1}$ which follows from the condition $P_{i}^2+Q_{i}^2+R_{i}^2+S_{i}^2=\I$, allows us to conclude that 
$S_i^2=\mathbbm{1}$ for any $i$. This also means that $P^2_i=R^2_i=Q^2_i=0$ and consequently that 
\be
P_i=R_i=Q_i=0.
\ee
This analysis leads us to the conclusion that 
$ \tilde{A}_{i,2}=Y\otimes S_i$ with $S_i^2=\mathbbm{1}$. It turns out, however, that the conditions \eqref{neweq5} are more restrictive. To see that explicitly, let us follow the reasoning applied already in Ref. \cite{sarkarPRL} and observe that each $S_i$ can be decomposed us
$S_i=S_i^{+}-S_i^{-}$, where $S_i^{\pm}$ are projections onto the 
eigensubspaces of $S_i$ corresponding to the eigenvalues $\pm1$ with $S_i^{+}+S_i^{-}=\I$. After plugging this representation into $S_1\otimes S_i=\mathbbm{1}$ 
and rearranging terms one arrives at the following conditions
\begin{equation}
    S_1^{+}\otimes S_i^-=S_1^{-}\otimes S_i^+=0 \qquad (i=2,\ldots,N).
\end{equation}
These conditions imply that there are two possible solutions 
for Alice's third measurements: either
\begin{equation}
    \tilde{A}_{i,2}=Y_{A_i'}\otimes \I_{A_i''} \qquad (i=1,\ldots,N) 
\end{equation}
or
\begin{equation}
    \tilde{A}_{i,2}=-Y_{A_i'}\otimes \I_{A_i''} \qquad (i=1,\ldots,N).
\end{equation}
This completes the proof.
\end{proof}

\section{Appendix B: Self-testing any extremal POVM}
\label{AppC}
%



Let us then consider an ideal reference measurement $E'_1=\{R'_{l|1}\}$ with $R'_{l|1}$ $(l=0,\ldots,K-1)$ being measurement elements defined on some Hilbert space $\mathbbm{C}^D$ and assume this measurement to be extremal. For completeness let us recall that an extremal POVM is one that cannot be decomposed into a convex combination of other POVM's. As $R'_{l|1}$ are of arbitrary rank, let denote them by $r_l=\mathrm{rank}(R'_{l|1})$. Let us also assume
that the measurement has $K\leq 2^N$ outcomes; $r_1+\ldots+r_K=D$. Such a measurement can always be embedded in an $N$-qubit Hilbert space 
$(\mathbbm{C}^2)^{\otimes N}$ where $N$ the minimal 
natural number such that $D\leq 2^N$. In order to represent $R'_{l|1}$ in the $N$-qubit Hilbert space one can for instance assign to any element of the
standard basis $\ket{i}\in \mathbbm{C}^D$ 
an element $\ket{i_1\ldots i_N}$ of the product standard basis of $(\mathbbm{C}^2)^{\otimes N}$
such that $i_1\ldots i_N$ is a binary representation of $i$ $(i=0,\ldots,D-1)$. 
Then, one needs to add another projective measurement element $M^{\perp}$ to $E_1'$ which is orthogonal to the support of $E_1'$ so that the resulting measurement defined on $(\mathbbm{C}^2)^{\otimes N}$ satisfies
\begin{equation}
    \sum_{i=1}^KR'_{i|1}+M^{\perp}=\mathbbm{1}_{2^N},
\end{equation}
where $\mathbbm{1}_{2^N}$ is the identity acting on $(\mathbbm{C}^2)^{\otimes N}$ where $\mathrm{rank}(M^{\perp})=2^N-D$.
Let us denote the new $(K+1)-$outcome measurement  by 
$N=\{R'_{1|1},\ldots,R'_{K|1},M^{\perp}\}$ and notice that 
it is extremal too because it is obtained by complementing the extremal measurement $E_1'$ by a projector orthogonal to its support \cite{APP05}.

Let us now represent any measurement element $N_l$ in terms of the Pauli basis
\begin{eqnarray}\label{idealPauli}
\sigma_{0}=Z,\quad\sigma_{1}=X,\quad\sigma_{2}=Y,
\quad \sigma_{3}=\I_2,
\end{eqnarray}
as
\begin{eqnarray}\label{proj1}
N_l=\sum_{i_1,\ldots,i_N=0}^{3}f^l_{i_1,\ldots,i_N}\bigotimes_{k=1}^N\sigma_{k,i_k}.
\end{eqnarray}


Let us again consider the quantum network scenario presented and assume that the observed correlations additionally satisfy the following conditions
\begin{eqnarray}\label{betstatfull}
\left\langle\tilde{A}_{1,i_1}\otimes\bigotimes_{k=2}^N A_{k,i_k}\otimes (R_{l|1})_E\right\rangle_{\psi_{AE}}&=&f^l_{i_1,\ldots,i_N}\quad \forall l,i_1,\ldots,i_N,
\end{eqnarray}
where $\tilde{A}_{1,0}$ and $\tilde{A}_{1,1}$ are given in \eqref{overA} with $\tilde{A}_{1,2}=A_{1,2}$, $\tilde{A}_{1,3}=\I$ and $A_{k,3}=\I$ for any $k$. 


%
Notice that the above statistics can be realised if the sources generate the two-qubit maximally entangled state $\ket{\phi^+}=(1/\sqrt{2})(\ket{00}+\ket{11})$ and the measurement $E_1=\{R_{l|1}\}$ is exactly $R_{l|1}=(R'_{l|1})^*$ where $R'_{l|1}$ are the reference measurement operators and $*$ denotes the complex conjugate, whereas the external parties perform the measurements in Eq. \eqref{GHZObs}.

Next, we show that the above conditions along with certification of the states and measurements presented in Theorem \ref{theorem1} are sufficient to show that up to the standard equivalences and complex conjugation the unknown measurement $E_{1}=\{R_{l|1}\}$ is equivalent to the ideal one $E_1'\equiv N$. 

\begin{thm}\label{theorem3}
Let us suppose that the correlations generated in the quantum network satisfy the assumptions of Theorem \ref{theorem1} along with the additional constraints in Eq. (\ref{betstatfull}). Then, for any $l=0,\ldots,K-1$ we can conclude that 
\begin{eqnarray}
    \label{stmeaNLWE2}
U_E R_{l|1} U_E^{\dagger} &=&(R'_{l|1})^* \otimes \mathbbm{1}_{E''},\quad \mathrm{if}\quad A_{i,2}=Y\otimes\I \quad\forall i, \quad\text{or}\quad \nonumber\\
 U_E R_{l|1} U_E^{\dagger} &=& R'_{l|1} \otimes \mathbbm{1}_{E''}\quad \ \ \ \mathrm{if}\quad  A_{i,2}=-Y\otimes\I \quad\forall i
\end{eqnarray}
where $U_E$ is the same unitary as in Theorem \ref{theorem1}
and $E''=E_1''\ldots E_N''$.
\end{thm}

\begin{proof}
For simplicity, we represent $R_{l|1}\equiv R_{l}$ throughout the proof. 
Let us consider the relation in Eq. (\ref{betstatfull}) and then substitute into it the observables $A_{s,j}$ for $s=1,\ldots,N$ and $j=0,1,2$ given in Eqs. \eqref{stmea1} and \eqref{stmea2}. As $A_{s,2}$ for any $s$ can have two realisations as given in Eq. \eqref{stmea2}, we first consider the case when $A_{s,2}=Y\otimes\I_{s''}$. This gives us
\begin{eqnarray}\label{stcond1NLWE}
\left\langle\left(\bigotimes_{s=1}^{N} U_{A_s}^{\dagger}\right) \left(\bigotimes_{k=1}^N\sigma_{k,i_k}\otimes\I_{A''}\right)\left(\bigotimes_{s=1}^{N} U_{A_s}\right)\otimes R_l\right\rangle_{\psi_{AE}}=f^l_{i_1,\ldots,i_N}.
\end{eqnarray}
It then follows from Theorem \ref{theorem1} that the global state $\ket{\psi_{AE}}$ can be represented as in Eq. \eqref{A10}, which allows us to rewrite the above condition in the following form
%
%
%
%
%
%
%
%
\begin{eqnarray}\label{116}
d_1'\ldots d_N'\left\langle \left(\bigotimes_{k=1}^N\sigma_{k,i_k}\otimes\bigotimes_{s=1}^{N}\sigma_{A_{s}''}^T\right)\overline{R}_l^T\otimes\I_E\right\rangle_{\phi^{+}_{2^ND}}=f^l_{i_1,\ldots,i_N},
\end{eqnarray}
where we have also used the fact that 
$(\I\otimes Q)|\phi^{+}_{d}\rangle=
(Q^T\otimes\I)|\phi^{+}_{d}\rangle$ and 
$\overline{R}_l$ are given by
\begin{eqnarray}\label{R0eq}
\overline{R}_l=\left(\bigotimes_{s=1}^{N} U_{\overline{A}_s} \right) R_l\left(\bigotimes_{s=1}^{N} U_{\overline{A}_s}^{\dagger}\right).
\end{eqnarray}
After tracing Eve's subsystems the expression in Eq. \eqref{116} can further be simplified to 
\begin{eqnarray}
\Tr\left[ \left(\bigotimes_{k=1}^N\sigma_{k,i_k}\otimes\bigotimes_{s=1}^{N} \sigma_{A_{s}''}^T\right)\overline{R}_l^T\right]=2^Nf^l_{i_1,\ldots,i_N}.
\end{eqnarray}

Now, we can exploit the fact that $\overline{R}_l^T$ acts on the Hilbert space $\mathbbm{C}^{2^N}\otimes\bigotimes_{s=1}^{N}\mathcal{H}_{{A_s''}}$, we can again express it using the Pauli basis as
\begin{eqnarray}\label{genR}
\overline{R}_l^T=\sum_{i_1,\ldots,i_N=0}^{3}
\bigotimes_{k=1}^N\sigma_{k,i_k}\otimes \tilde{R}^l_{i_1,\ldots,i_N}
\end{eqnarray}
where $\tilde{R}_{i_1,\ldots,i_N}$ acts on $\bigotimes_{s=1}^{N}\mathcal{H}_{{A_s''}}$.
Then, using the fact that Pauli matrices are orthogonal in the Hilbert-Schmidt scalar product,
$\Tr[\sigma_i\sigma_j]=2\delta_{ij}$ with $\delta_{ij}$ being the Kronecker's symbol, we obtain 
\begin{eqnarray}\label{nlwest11}
      \Tr\left[ \left(\bigotimes_{s=1}^{N}\sigma^T_{A_s''}\right)\tilde{R}^l_{i_1,\ldots,i_N}\right]=
      f^l_{i_1,\ldots,i_N}\qquad \forall i_1,\ldots,i_N.
\end{eqnarray}
The rest of the proof follows the same lines as the proof of Theorem 6 of Ref. \cite{sarkar3}. We first decompose the state $\bigotimes_{s=1}^{N}\sigma^T_{A_s''}$ in its eigenbasis, let's say $\{\ket{b}\}$ with eigenvalues $p_b$ which allows us to obtain from Eq. \eqref{nlwest11} that
\begin{eqnarray}\label{POVMST5}
\sum_bp_b\bra{b}\tilde{R}^l_{i_1,\ldots,i_N}\ket{b}=f^l_{i_1,\ldots,i_N}.
\end{eqnarray}
Next, we introduce a collection of POVMs $\{R_{l}^b\}_l$ for any $b$, the measurement operators of which are given by 
\begin{eqnarray}
    R_{l}^b&=&\Tr_{A_1''\ldots A_N''}\left[ \left(\I_{A'}\otimes\ket{b}\!\bra{b}_{A_1''\ldots A_N''}\right)\overline{R}_l^T\right]\nonumber\\
&=&\sum_{\textbf{}i_1,\ldots,i_N=0}^{3}\bra{b}\tilde{R}^l_{i_1,\ldots,i_N}\ket{b}
\bigotimes_{k=1}^N\sigma_{k,i_k}.
\end{eqnarray}
Since $\overline{R}_l^T\geq 0$ it directly follows from the above equation that $R_{l}^b\geq 0$ for any $l$ and $b$. Moreover, $\sum_l\overline{R}_l^T=\I$ implies that $\sum_l R_{l}^b=\I$ for any $b$. Thus, $\{R_{l}^b\}_l$ are proper quantum measurements for all $b$. 

These additional POVMs, through Eq. (\ref{POVMST5}), allow us to decompose
the ideal POVM $\{N_l\}$ as $N_{l}=\sum_bp_bR_{l}^b$.
As the ideal POVM is extremal, it can not be decomposed in terms of other POVM's. Consequently, we have that 
\begin{eqnarray}
\forall b\qquad R_{l}^b=N_l,
\end{eqnarray}
which is equivalent to
\begin{eqnarray}\label{eq83}
\bra{b}\tilde{R}^l_{i_1,\ldots,i_N}\ket{b}=f^l_{i_1,\ldots,i_N}\qquad \forall b .
\end{eqnarray}
Let us now consider the following vectors:
\begin{eqnarray}\label{eq84}
\ket{\varphi_{a,s,t}}=\frac{1}{\sqrt{2}}\left(\ket{s}\pm\mathbbm{i}^a\ket{t}\right),
\end{eqnarray}
such that $\ket{s}$ and $\ket{t}$ are vectors that belong to the eigenbasis $\{\ket{k}\}$ of $\bigotimes_{s=1}^{N}\sigma^T_{A_s''}$ such that $s\ne t$ and $a=0,1$. Let us look at the quantity
\begin{eqnarray}
\Tr_{A_1''\ldots A_N''}\left[\left(\I_{A'}\otimes|\varphi_{a,s,t}\rangle\!\langle \varphi_{a,s,t}|_{A_1''\ldots A_N''}\right)\overline{R}_l^T\right]=\sum_{i_1,\ldots,i_N=0}^{3}
\bigotimes_{k=1}^N\sigma_{k,i_k}\ \Tr(|\varphi_{a,s,t}\rangle\!\langle \varphi_{a,s,t}|_{A_1''\ldots A_N''}\tilde{R}^l_{i_1,\ldots,i_N}),
\end{eqnarray}
which with the aid of the explicit form of the vector \eqref{eq84}
can be rewritten as
\begin{eqnarray}
\Tr_{A_1''\ldots A_N''}\left[\left(\I_{A'}\otimes|\varphi_{a,s,t}\rangle\!\langle \varphi_{a,s,t}|_{A_1''\ldots A_N''}\right)\overline{R}_l^T\right]=N_l\pm
\Tr_{A_1''\ldots A_N''}\left[(\mathbbm{1}_{A'}\otimes L^a_{A_1''\ldots A_N''})\overline{R}_l^T\right],
\end{eqnarray}
where $L_{A_1''\ldots A_N''}^a=(\mathbbm{i}^a/2)\left(|t\rangle\!\langle s|+(-1)^a|s\rangle\!\langle t|\right)$. Using the fact that $\overline{R}_l^T$ is positive semi-definite, which implies that the left-hand side of the above 
relation is positive semi-definite too, we conclude that
\begin{eqnarray}\label{eq87}
N_l\geq
\pm\Tr_{A_1''\ldots A_N''}\left[(\mathbbm{1}_{B'}\otimes L^a_{A_1''\ldots A_N''})\overline{R}_l^T\right]=:\pm\Omega_{l}^a.
\end{eqnarray}
As we show at the end of the proof, the above operator inequality can be satisfied only if 
$\mathrm{supp}(\Omega_{l}^a)\subseteq \mathrm{supp}(N_l)$ for any $l$ and for both values of $a=0,1$. Let us then denote by $\ket{\phi_{l,m}}$ $(m=1,\ldots,r_l)$ as the eigenvectors of $N_l$, where $r_l$ denotes the rank of $N_l$. Due to the fact that $\mathrm{supp}(\Omega_{l}^a)\subseteq \mathrm{supp}(N_l)$, each operator $\Omega_{l}^a$ can be written using the eigenvectors of $N_l$ as
\begin{equation}\label{Omega}
    \Omega_l^a=\sum_{m,n=1}^{r_l}\gamma_{m,n}^{(l,a)}|\phi_{l,m}\rangle\!\langle\phi_{l,n}|,
\end{equation}
where $\gamma_{m,n}^{(l)}$ are some complex coefficients; recall that $\Omega_l^a$ are Hermitian.


Let us now notice that it follows from Eq. (\ref{eq87}) that 
\begin{eqnarray}
    \sum_{l}\Omega_l^a&=&\sum_l\Tr_{A_1''\ldots A_N''}\left[(\mathbbm{1}_{A'}\otimes L^a_{A_1''\ldots A_N''})\overline{R}_l^T\right]\nonumber\\
    &=& \Tr_{A_1''\ldots A_N''}\left[(\mathbbm{1}_{A'}\otimes L^a_{A_1''\ldots A_N''})\right]=0,
\end{eqnarray}
where the second equality follows from the fact that $\overline{R}_l^T$ sum up to the identity, whereas the third equality is a consequence of the fact that the $L^a$ operators are traceless.
We thus have that for any $a$, 
\begin{equation}
    \sum_{l}\Omega_l^a=0.
\end{equation}
After plugging Eq. (\ref{Omega}), we obtain the following equation
\begin{equation}\label{equacioni}
    \sum_{l,m,n}\gamma_{m,n}^{(l,a)}|\phi_{l,m}\rangle\!\langle\phi_{l,n}|=0
\end{equation}
Let us now exploit the fact that the measurement $\{N_l\}$ is extremal. 
As proven in Ref. \cite{APP05} it follows from the latter that 
the operators $|\phi_{l,m}\rangle\!\langle\phi_{l,n}|$ are linearly independent for all $l$, $m$ and $n$. It thus follows from Eq. (\ref{equacioni}) 
that $\gamma$'s vanish and therefore
\begin{equation}
    \Omega_{l}^a=\Tr_{A_1''\ldots A_N''}\left[(\mathbbm{1}_{A'}\otimes L^a_{A_1''\ldots A_N''})\overline{R}_l^T\right]=0
\end{equation}
for any $a$ and any $l$. Using now the explicit form of the operators $L^a_{A_1''\ldots A_N''}$, the above 
implies the following conditions for the $\overline{R}_l^T$ operators,
\begin{eqnarray}\label{POVMST3}
\bigotimes_{k=1}^N\sigma_{k,i_k}\left(\bra{s}\tilde{R}^l_{i_1,\ldots,i_N}\ket{t}+\bra{t}\tilde{R}^l_{i_1,\ldots,i_N}\ket{s}\right)=0,
\end{eqnarray}
for $a=0$, and 
\begin{eqnarray}\label{POVMST4}
\bigotimes_{k=1}^N\sigma_{k,i_k}\left(\bra{t}\tilde{R}^l_{i_1,\ldots,i_N}\ket{s}
-\bra{s}\tilde{R}^l_{i_1,\ldots,i_N}\ket{t}\right)=0
\end{eqnarray}
for $a=1$, where we used the fact that $\bigotimes_{k=1}^N\sigma_{k,i_k}$ are linearly 
independent for all $i_1,\ldots,i_N$. The only possible solution to these equations
is that
\begin{equation}
 \bra{s}\tilde{R}^l_{i_1,\ldots,i_N}\ket{t}=0\qquad (s\neq t).   
\end{equation}
Combining this with Eq. \eqref{eq83} we obtain that 
\begin{equation}
    \tilde{R}^l_{i_1,\ldots,i_N}=f_{i_1,\ldots,i_N}^l\mathbbm{1},
\end{equation}
which implies the desired relation $\overline{R}_l^T=N_l\otimes\I_{A_1''\ldots A_N''}$ 
for all $l$. 

Let us finally prove that $\mathrm{supp}(\Omega_l^a)\subseteq\mathrm{supp}(N_l)$ for any $l$ and $a$. For this purpose, let us pick a particular $N_l$ and consider its eigenvectors 
$\ket{\psi_i}$ $(i=1,\ldots,r_l)$, where $r_l$ is the rank of $N_l$. We then construct an orthonormal basis $\{\ket{\phi_i}\}_i$ with $i=1,\ldots,2^N$ in the $N$-qubit Hilbert space such that $\ket{\phi_i}=\ket{\psi_i}$ for $i=1,\ldots,r_l$.

Sandwiching then inequalities \eqref{eq87} with $\ket{\phi_i}$ and $\ket{\phi_j}$, 
one obtains
\begin{eqnarray}\label{VinoTinto}
\forall_{i,j}\qquad \bra{\phi_i}N_l\ket{\phi_j}\geq
\pm\bra{\phi_i}\Omega_l^a\ket{\phi_j}.
\end{eqnarray}
Now, for any pair $i,j$ such that $i\geq r_l+1$ or $j\geq r_l+1$ (i.e., either $\ket{\phi_i}$ or $\ket{\phi_j}$ belongs to the kernel of $N_l$), the above condition gives us
\begin{eqnarray}
0\geq\pm\bra{\phi_i}\Omega_l^a\ket{\phi_j},
\end{eqnarray}
which directly implies that $\bra{\phi_i}\Omega_l^a\ket{\phi_j}=0$ for any pair $i,j$ such that either $i\geq r_l+1$ or $j\geq r_l+1$. This directly implies that $\Omega_l^a$ act nontrivially 
only on $\mathrm{supp}(N_l)$, and thus $\mathrm{supp}(\Omega_l^a)\subseteq \mathrm{supp}(N_l)$.
%

\subsection{Alternative proof for projective measurements}

Here we present an alternative proof in the case when we want to certify Eve's second measurement $E_1=\{R_{l|1}\}$ to be a particular reference projective measurement $E'_1=\{R'_{l|1}\}$ (up to some additional degrees of freedom). For this purpose, we use the states generated by the sources, certified as in Eq. \eqref{statest1}, and the certified measurements in Eqs. \eqref{stmea1} and \eqref{stmea2} along with some additional conditions imposed on the observed correlations.

Let us consider an ideal reference projective measurement 
$E'_1=\{R'_{l|1}\}$ where $R'_{l|1}$ are mutually orthogonal projections defined on a Hilbert space of arbitrary finite dimension $\mathbbm{C}^D$
whose ranks are in general arbitrary; let denote them by $r_l=\mathrm{rank}(R'_{l|1})$. Let us also assume
that the measurement has $K\leq 2^N$ outcomes; $r_1+\ldots+r_K=D$. Such a measurement can always be embedded in an $N$-qubit Hilbert space 
$(\mathbbm{C}^2)^{\otimes N}$ where $N$ the minimal 
natural number such that $D\leq 2^N$. In order to represent $R'_{l|1}$ in the $N$-qubit Hilbert space one can for instance assign to any element of the
standard basis $\ket{i}\in \mathbbm{C}^D$ 
an element $\ket{i_1\ldots i_N}$ of the product standard basis of $(\mathbbm{C}^2)^{\otimes}$
such that $i_1\ldots i_N$ is a binary representation of $i$ $(i=0,\ldots,D-1)$. 
Then, one needs to add another projective measurement element $M^{\perp}$ to $E_1'$ so that the resulting measurement defined on $(\mathbbm{C}^2)^{\otimes N}$ satisfies
\begin{equation}
    \sum_{i=1}^KR'_{i|1}+M^{\perp}=\mathbbm{1}_{2^N},
\end{equation}
$\mathbbm{1}_{2^N}$ is the identity acting on $(\mathbbm{C}^2)^{\otimes N}$. Given that the above procedure applies to any projective measurement, We can assume that the reference projective measurement $E_1'$ is from the beginning defined on the $N$-qubit Hilbert space.

Let us then decompose every measurement 
element in the Pauli basis \eqref{idealPauli}, 
as
\begin{eqnarray}\label{projDec}
R'_{l|1}=\sum_{i_1,\ldots,i_N=0}^{3}f^l_{i_1,\ldots,i_N}
\sigma_{1,i_1}\otimes\ldots\otimes \sigma_{N,i_N},
\end{eqnarray}
where the additional subscript marks the system on which the Pauli matrix $\sigma_{k,i_k}$ acts; for instance, $\sigma_{3,2}$ is $\sigma_2$ acting on site $3$. Then, 
$f_{i_1,\ldots,i_N}^l$ are real numbers defined as 
\begin{eqnarray}  f_{i_1,\ldots,i_N}^l=\frac{1}{2^N}\Tr[(\sigma_{1,i_1}\otimes\ldots\otimes \sigma_{N,i_N}
)R'_{l|1}].
\end{eqnarray}

Let us now move on to our self-testing scheme and consider again the quantum network scenario presented in above. Let us then assume that apart from the conditions imposed in Theorem \ref{theorem1}, the correlations observed in the network satisfy the following additional set of constraints
\begin{eqnarray}\label{betstatfullp}
\sum_{i_1,\ldots,i_N=0}^3f^l_{i_1,\ldots,i_N}\left\langle\tilde{A}_{1,i_1}\otimes\bigotimes_{k=2}^N A_{k,i_k}\otimes (R_{l|1})_E\right\rangle_{\ket{\psi_{AE}}}&=&\frac{r_l}{2^N}\quad \forall l,
\end{eqnarray}
where $\tilde{A}_{1,0}, \tilde{A}_{1,1}$ are given in 
\eqref{overA} with $\tilde{A}_{1,2}=A_{1,2}, \tilde{A}_{1,3}=\I$ and $A_{k,3}=\I$ for any $k$. 
%

We can now present our proof.

\begin{thm}\label{theorem2}
Let us suppose that the correlations generated in the quantum network satisfy the assumptions of Theorem \ref{theorem1} and the additional conditions in Eq. (\ref{betstatfullp}). Then, for any $l=0,\ldots,K-1$ we conclude that 
\begin{eqnarray}\label{stmeaNLWE1}
U_E R_{l|1}\, U_E^{\dagger} &=&(R'_{l|1})^* \otimes \mathbbm{1}_{E''},\quad \mathrm{if}\quad A_{i,2}=Y\otimes\I \quad\forall i, \quad\text{or}\quad \nonumber\\
 U_E R_{l|1}\, U_E^{\dagger} &=& R'_{l|1} \otimes \mathbbm{1}_{E''}\quad \ \ \ \mathrm{if}\quad  A_{i,2}=-Y\otimes\I \quad\forall i
\end{eqnarray}
where $U_E$ is the same unitary as in Theorem \ref{theorem1}
and $E''=E_1''\ldots E_N''$.
\end{thm}

\begin{proof}
For simplicity, we represent $R_{l|1}\equiv R_{l}$ throughout the proof. Let us first consider the relation in Eq. (\ref{betstatfullp}) for a particular $l$ and then expand it by using the fact that the observables $A_{s,j}$ for $s=1,\ldots,N$ and $j=0,1,2$ are certified as in Eqs. \eqref{stmea1} and \eqref{stmea2}. As $A_{s,2}$ for any $s$ can have two realisations as given in Eq. \eqref{stmea2}, we first consider the case when $A_{s,2}=Y\otimes\I_{s''}$. This gives us
\begin{eqnarray}\label{stcond1NLWEp}
\sum_{i_1,\ldots,i_N=0}^{3}f^l_{i_1,\ldots,i_N}\left\langle\left(\bigotimes_{s=1}^{N} U_{A_s}^{\dagger}\right) \left(\bigotimes_{k=1}^N\sigma_{k,i_k}\otimes\I_{A''}\right)\left(\bigotimes_{s=1}^{N} U_{A_s}\right)\otimes R_l\right\rangle_{\psi_{AE}}=\frac{r_l}{2^N}.
\end{eqnarray}
Then, it follows from Theorem \ref{theorem1} [cf. Eq. (\ref{A10})] that the global state can be represented in the following way 
\begin{eqnarray}\label{statep1}
\ket{\psi_{AE}}=\bigotimes_{s=1}^{N}\ket{\psi_{A_sE_s}}=\bigotimes_{s=1}^{N}( U_{A_s}^{\dagger}\otimes U_{E_s}^{\dagger})(|\phi^+_{A_s'E_s'}\rangle\otimes\ket{\xi_{A_s''E_s''}}).
\end{eqnarray}
Notice also that by virtue of Eq. \eqref{junkst1} the junk states $\ket{\xi_{A_s''E_s''}}$ can be represented as
\begin{eqnarray}\label{junkstp2}
 \ket{\xi_{A_i''E_i''}}= \left(\I_{A_i''}\otimes\sqrt{d'_i\sigma_{E_i''}}\right) \ket{\phi^{+}_{A''_iE''_i}},
 %
\end{eqnarray}
where we omitted the local dimension $d_i'$ of the maximally entangled state.
Thus, the joint state can further be represented in terms of a single maximally entangled states
between $A$ and $E$ systems [cf. Eq. \eqref{MaxEntDN}] as
\begin{eqnarray}\label{115p}
\left(\bigotimes_{s=1}^{N}U_{A_s}\otimes U_{E_s}\right)|\psi_{AE}\rangle=\left(\bigotimes_{s=1}^{N} \sqrt{d'_s\sigma_{E_s''}}\right)|\phi^{+}_{D_N}\rangle_{A|E}.
\end{eqnarray}
Here the local dimension of the state is $D_N=d_1\ldots d_N=2^Nd_1'\ldots d_N'$ and we used $A|E$ to highlight the fact that $|\phi^{+}_{D_N}\rangle_{A|E}$ is a maximally entangled state between all external parties $A_i$ and the central one $E$.  

Plugging the state \eqref{115p} into Eq. \eqref{stcond1NLWEp}, and then using the fact that $(\I\otimes Q)|\phi^{+}_{d}\rangle=(Q^T\otimes\I)|\phi^{+}_{d}\rangle$ for any matrix $Q$, we obtain
\begin{eqnarray}\label{1161}
\sum_{i_1,\ldots,i_N=0}^{3}f^l_{i_1,\ldots,i_N}\left\langle \left(\bigotimes_{k=1}^N\sigma_{k,i_k}\otimes\bigotimes_{s=1}^{N}d'_s\sigma^T_{A_s''}\right)\overline{R}_l^T\otimes\I_E\right\rangle_{\phi^{+}_{D_N}}=\frac{r_l}{2^N},
\end{eqnarray}
where 
\begin{eqnarray}\label{R0eqp}
\overline{R}_l=\left(\bigotimes_{s=1}^{N} U_{E_s} \right) R_l\left(\bigotimes_{s=1}^{N} U_{E_s}^{\dagger}\right).
\end{eqnarray}
Then, tracing out the Eve's subsystems from the expectation value in Eq. (\ref{1161}), we arrive at 
\begin{eqnarray}
\sum_{i_1,\ldots,i_N=0}^{3}f^l_{i_1,\ldots,i_N}\Tr\left[ \left(\bigotimes_{k=1}^N\sigma_{k,i_k}\otimes\bigotimes_{s=1}^{N} \sigma^T_{A_s''}\right)\overline{R}_l^T\right]=r_l.
\end{eqnarray}
Using Eq. \eqref{projDec}, we can simplify the above formula as
\begin{eqnarray}\label{b18}
    \Tr\left[ \left(R'_{l|1}\otimes\bigotimes_{s=1}^{N}\sigma^T_{A_s''}\right)\overline{R}_l^T\right]=r_l.
\end{eqnarray}

Let us now concentrate on a particular outcome $l$ and consider an orthonormal basis in $\mathbbm{C}^{2^N}$ which contains the eigenvectors of the projection $R'_{l|1}$; we denote this basis by $\mathcal{B}=\{|\delta_{m}\rangle\}$, where the first $r_l$ vectors for $m=1,\ldots,r_l$
are the eigenvectors of $R'_{l|1}$. Due to the fact that $\overline{R}_0^T$ acts on the Hilbert space $(\mathbbm{C}^2)^{\otimes N}\otimes\bigotimes_{s=1}^{N}\mathcal{H}_{{A_s''}}$, we can express it using the basis $\mathcal{B}$ as
\begin{eqnarray}\label{genRp}
\overline{R}_0^T=\sum_{m,m'}\ket{\delta_{m}}\!\bra{\delta_{m'}}\otimes \tilde{R}_{m,m'},
\end{eqnarray}
where $\tilde{R}_{l,l'}$ are some unknown matrices acting on $\bigotimes_{s=1}^{N}\mathcal{H}_{{A_s''}}$. 

After plugging Eq. (\ref{genRp}) into (\ref{b18})
and using the cyclic property of trace, we then obtain
\begin{eqnarray}\label{nlwest11p}
   \sum_{m=1}^{r_{l}}   \Tr\left[ \left(\bigotimes_{s=1}^{N}\sigma^T_{A_s''}\right)\tilde{R}_{m,m}\right]=r_l.
\end{eqnarray}
Since each term under the above sum is upper-bounded by one, we have that 
\begin{equation}
\Tr\left[ \left(\bigotimes_{s=1}^{N}\sigma_{A_s''}^T\right)\tilde{R}_{m,m}\right]=1\qquad (m=1,\ldots,r_l).    
\end{equation}
Now, due to the fact that the density matrices $\sigma_{A''_s}$ are full rank and $0\leqslant \tilde{R}_{m,m}\leqslant \mathbbm{1}$ which stems from the fact that $\{R_l\}$ is a quantum measurement, one observers that the above equation holds true if, and only if, $\tilde{R}_{m,m}=\mathbbm{1}$ for any $m=1,\ldots,r_0$. This implies that 
\begin{eqnarray}\label{121}
\overline{R}_l^T&=&R'_{l|1}\otimes\I_{A''}+\mathbbm{L}_l,
\end{eqnarray}
where $\mathbbm{L}_l$ is an operator composed of the remaining terms of 
$\overline{R}_l$ appearing in the decomposition (\ref{genRp}); in other words, $\mathbbm{L}_l=\overline{R}_l^T-R'_{l|1}\otimes \mathbbm{1}$.

\begin{eqnarray}\label{L03}
\mathbbm{L}_l&=&\sum_{\substack{m,m'=1\\m\neq m'}}^{r_l}\ket{\delta_m}\!\!\bra{\delta_{m'}}\otimes \tilde{R}_{m,m'}+\sum_{m=1}^{r_l}\sum_{m'=r_l+1}^{2^N}\ket{\delta_m}\!\!\bra{\delta_{m'}}\otimes \tilde{R}_{m,m'}\nonumber\\
&&+\sum_{m=r_l+1}^{2^N}\sum_{m'=1}^{r_l}\ket{\delta_m}\!\!\bra{\delta_{m'}}\otimes \tilde{R}_{m,m'}+\sum_{m=r_{l}+1}^{2^N}\sum_{m'=r_l+1}^{2^N}\ket{\delta_m}\!\!\bra{\delta_{m'}}\otimes \tilde{R}_{m,m'}.
\end{eqnarray}
Let us now show that the fact that $\overline{R}_l^T\leq \mathbbm{1}$ imposes that the 
first three sums in the above representation must vanish. To this aim, let us consider a product vector
$\ket{\delta_m}\ket{\xi}$ for any $m=1,\ldots,r_l$
and any vector $\ket{\xi}$, and act on it with $\overline{R}_l^T=R'_{l|1}\otimes\I_{A''}+\mathbbm{L}_l$.
This gives
\begin{equation}
\overline{R}_l^T\ket{\delta_m}\ket{\xi}=\ket{\delta_m}\ket{\xi}+\sum_{\substack{m'=1\\m'\neq m}}^{2^N}\ket{\delta_{m'}}\otimes \tilde{R}_{m',m}\ket{\xi},
\end{equation}
which further leads to the following condition
\begin{equation}
    \bra{\delta_m}\bra{\xi}(\overline{R}_l^T)^2\ket{\delta_m}\ket{\xi}=1+\sum_{\substack{m'=1\\m'\neq m}}^{2^N} \bra{\xi}\tilde{R}_{m',m}^{\dagger}\tilde{R}_{m',m}\ket{\xi}.
\end{equation}
Let us observe now that $\overline{R}_l^T\leq \mathbbm{1}$ implies $(\overline{R}_l^T)^2\leq \mathbbm{1}$, which after applying to the above equation allows one to conclude that
\begin{equation}
    \sum_{\substack{m'=1\\m'\neq m}}^{2^N} \bra{\xi}\tilde{R}_{m',m}^{\dagger}\tilde{R}_{m',m}\ket{\xi}\leq 0.
\end{equation}
Therefore, $\bra{\xi}\tilde{R}_{m',m}^{\dagger}\tilde{R}_{m',m}\ket{\xi}=0$ for any vector $\ket{\xi}$, which in turn means that $\tilde{R}_{m',m}=0$ for any $m'\neq m$.
Since the same argument can be repeated for any $\ket{\delta_m}$ with $m=1,\ldots,r_l$ one then has that all the matrices 
$\tilde{R}_{m',m}=0$
for any $m=1,\ldots,r_l$ and any $m'\neq m$.
This finally implies that the $\mathbbm{L}_l$ operator
can be rewritten as 
\begin{eqnarray}\label{L04}
\mathbbm{L}_l=\sum_{m=r_{l}+1}^{2^N}\sum_{m'=r_l+1}^{2^N}\ket{\delta_m}\!\bra{\delta_{m'}}\otimes \tilde{R}_{m,m'},
\end{eqnarray}
and thus it acts on a subspace of the corresponding Hilbert space which is orthogonal to that supporting
$R'_{l|1}\otimes \mathbbm{1}_{A''}$. Due to the fact that 
the measurement operator satisfies $\overline{R}_l\geq 0$ the latter clearly implies
that $\mathbbm{L}_l\geq 0$. 

It is important to stress that the same reasoning can be applied to 
$\overline{R}_l^T$ for any $l$, and thus 
every $\overline{R}_l^T$ decomposes into a direct sum 
\begin{equation}\label{costam}
\overline{R}_l^T=R'_{l|1}\otimes\I_{A''}+\mathbbm{L}_l,
\end{equation}
of two positive semi-definite operators $R'_{l|1}\otimes\I_{A''}$ and $\mathbbm{L}_l$.
Now, after summing \eqref{costam} over all outcomes, 
one obtains
\begin{equation}\label{costam2}
    \sum_{l}\mathbbm{L}_l=0,
\end{equation}
where we also exploited the fact that both $\overline{R}_l$ and $R'_{l|1}$ sum up to the identities. 
Taking into account that $\mathbbm{L}_l\geq 0$, one deduces from the condition (\ref{costam2}) that 
$\label{L}_l=0$ for any $l$ and thus
\begin{equation}
    \overline{R}_l^T=R'_{l|1}\otimes\I_{A''}
\end{equation}
or, equivalently, 
\begin{equation}
    \overline{R}_l=(R'_{l|1})^{*}\otimes\I_{A''}.
\end{equation}

Similarly, one can consider the other case of $A_{s,2}=-Y\otimes\I_{s''}$ for any $s$ and obtain an analogous relation to Eq. \eqref{b18} which reads
\begin{eqnarray}
    \Tr\left[ \left((R'_{l|1})^{*}\otimes\bigotimes_{s=1}^{N}\sigma^T_{A_s''}\right)\overline{R}_l^T\right]=r_l,\qquad (l=0,\ldots,K-1).
\end{eqnarray}
Following exactly the same steps as above, we find that $\overline{R}_l=R'_{l|1}\otimes\I$ for $l=0,\ldots,K-1$, thus completing the proof.
\end{proof}

\end{proof}

\section{Appendix C: Self-testing any quantum state}
Using the Theorems \ref{theorem3} proven above, we can now show how it is possible to probabilistically self-test arbitrary pure or mixed $N$-qubit quantum state. For this purpose, let us first certify the post-measurement state when eve chooses $e=1$ and obtains an outcome $l$.
\setcounter{thm}{0}
\begin{cor}\label{corr1}
Assume that the states are certified as in Eq. \eqref{statest1} and Eve's measurement $E_1$ is certified as in Eq. \eqref{stmeaNLWE2}. Consequently, when Eve observes the $l-$th outcome of her measurement, the post-measurement state with the external parties is given by
\begin{eqnarray}
U_A\,\rho_{A}^l\,U_A^{\dagger}&=&\frac{1}{\Tr R'_{l|1}}R'_{l|1}\otimes \tilde{\rho}_{A''},\quad \mathrm{if}\quad A_{i,2}=Y\otimes\I \quad\forall i, \quad\text{or}\quad \nonumber\\
 U_A\,\rho_{A}^l\,U_A^{\dagger}&=&\frac{1}{\Tr R'_{l|1}}(R'_{l|1})^*\otimes \tilde{\rho}_{A''}\quad \ \ \ \mathrm{if}\quad  A_{i,2}=-Y\otimes\I \quad\forall i,
\end{eqnarray}
where $U_A=\bigotimes_{s=1}^{N}U_{A_s}$ and the unitaries $U_{A_s}$ are the same as in Eq. 
\eqref{statest1}.
\end{cor}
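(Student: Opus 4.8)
The plan is to reduce the definition of the conditional state to the two certifications already established. I start from the (normalized) post-measurement state prepared for the external parties when Eve performs $E_1$ and reads off the outcome $l$,
\begin{equation}
\rho_A^l=\frac{1}{\overline{P}(l|1)}\Tr_{E}\left[\left(\I_A\otimes R_{l|1}\right)\bigotimes_{i=1}^N\proj{\psi_{A_iE_i}}\right],
\end{equation}
which is the $e=1$ counterpart of Eq.~\eqref{54}, with $\overline{P}(l|1)=\Tr[(\I_A\otimes R_{l|1})\bigotimes_i\proj{\psi_{A_iE_i}}]$ the probability of that outcome. Everything reduces to evaluating this expression once the certified forms are plugged in.

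First I would conjugate both sides by $U_A=\bigotimes_i U_{A_i}$ and use that the partial trace over $E$ is invariant under conjugation by any unitary acting on $E$ alone; this lets me insert $U_E=\bigotimes_i U_{E_i}$ inside $\Tr_{E}$ at no cost. Then the product state $\bigotimes_i\proj{\psi_{A_iE_i}}$ may be replaced by its certified form from Eq.~\eqref{A10}, namely $\bigotimes_i\left(\proj{\phi^+_{A_i'E_i'}}\otimes\proj{\xi_{A_i''E_i''}}\right)$, while $R_{l|1}$ is simultaneously replaced by $U_E R_{l|1}U_E^{\dagger}$, which by Eqs.~\eqref{stmeaNLWE1} and \eqref{stmeaNLWE2} equals $(R'_{l|1})^{*}\otimes\I_{E''}$ or $R'_{l|1}\otimes\I_{E''}$ according to the sign realisation of the observables $A_{i,2}$.

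Next I would exploit that the certified measurement element acts only on the primed qubit systems $E'=E_1'\ldots E_N'$, whereas the identity acts on $E''$, so the trace factorizes across the $A'E'$ and $A''E''$ sectors. In the auxiliary sector $\Tr_{E''}\left[\bigotimes_i\proj{\xi_{A_i''E_i''}}\right]$ produces the junk state $\tilde{\rho}_{A''}$. In the qubit sector I would merge the $N$ two-qubit maximally entangled states into a single maximally entangled state of dimension $2^N$ as in Eq.~\eqref{MaxEntDN}, and then apply the standard identity $\Tr_{E'}[(\I_{A'}\otimes Q)\proj{\phi^{+,2^N}_{A'E'}}]=2^{-N}Q^{T}$. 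The same manipulation with the measurement element replaced by the identity fixes the normalization as $\overline{P}(l|1)=2^{-N}\Tr R'_{l|1}$, so the factors of $2^{-N}$ cancel and the claimed expression $\rho_A^l\mapsto R'_{l|1}/\Tr R'_{l|1}$ emerges.

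The only genuinely delicate point is the bookkeeping of complex conjugation versus transposition: the certified measurement carries a complex conjugate, the maximally entangled state contributes a transpose, and Hermiticity of $R'_{l|1}$ (so that $Q^{T}=Q^{*}$) is exactly what makes the two combine correctly. Concretely, in the $A_{i,2}=Y\otimes\I$ case one has $Q=(R'_{l|1})^{*}$ and $Q^{T}=R'_{l|1}$, whereas in the $A_{i,2}=-Y\otimes\I$ case one has $Q=R'_{l|1}$ and $Q^{T}=(R'_{l|1})^{*}$, reproducing the two branches of the statement. I would finally remark that nothing in the argument distinguishes a projective $E_1'$ (with $\Tr R'_{l|1}=r_l$) from a rank-one extremal POVM (with $\Tr R'_{l|1}=\lambda_l$), so a single computation covers the outputs of both Theorem~\ref{theorem2} and Theorem~\ref{theorem3}.
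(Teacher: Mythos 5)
Your proposal is correct and follows essentially the same route as the paper's own proof: plug the certified state \eqref{A10} and certified measurement \eqref{stmeaNLWE1}/\eqref{stmeaNLWE2} into the definition of the conditional state, use the maximally-entangled-state transpose identity together with Hermiticity of $R'_{l|1}$ to turn $(R'_{l|1})^{*}$ into $R'_{l|1}$ on the $A'$ side, trace out $E$ to produce $\tilde{\rho}_{A''}$, and fix the normalization via $\overline{P}(l|1)=\Tr(R'_{l|1})/2^N$. The only cosmetic differences are that you phrase the key identity as a partial-trace formula rather than the vector identity $(\I\otimes Q)\ket{\phi^{+}}=(Q^{T}\otimes\I)\ket{\phi^{+}}$, and you make explicit the (correct) justification for inserting $U_E$ inside $\Tr_E$.
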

\begin{proof}
The post-measurement state when Eve observes the $l-$th outcome of her measurement $E_1$ is given by
    \begin{eqnarray}
        \rho_{A}^l=\frac{1}{\overline{P}(l|e=1)}\Tr_{E}\left[\left(\I_{A}\otimes R_{l|1}\right)\bigotimes_{s=1}^{N}\proj{\psi_{A_sE_s}}\right].
    \end{eqnarray}
Then, substituting the states $\ket{\psi_{A_sE_s}}$ from Eq. \eqref{A10} and the measurement elements $R_{l|1}$ from Eq. \eqref{statest1}, we get that
    \begin{eqnarray}
        U_A\,\rho_{A}^l\,U_A^{\dagger}=\frac{1}{\overline{P}(l|e=1)}\Tr_{E}\left[\left(\I_{A}\otimes (R'_{l|1})^*\otimes\I_{E''}\right)\bigotimes_{s=1}^{N}\proj{\phi^+}_{A_s'E_s'}\otimes \proj{\xi_s}_{A_s''E_s''}\right].
    \end{eqnarray}
Again using the identity $(\I\otimes Q)\ket{\phi^{+}}=(Q^T\otimes\I)\ket{\phi^{+}}$, we obtain
\begin{eqnarray}
U_A\,\rho_{A}^l\,U_A^{\dagger}=\frac{1}{\overline{P}(l|e=1)}\Tr_{E}\left[\left(R'_{l|1}\otimes\mathbbm{1}_{A''}\otimes\I_{E}\right)\bigotimes_{s=1}^{N}\proj{\phi^+}_{A_s'E_s'}\otimes \proj{\xi_s}_{A_s''E_s''}\right],
\end{eqnarray}
where we also used the fact that 
$\left((R'_{l|1})^*\right)^T=R'_{l|1}$. After tracing the $E$ subsystem we arrive at
\begin{eqnarray}
    U_A\,\rho_{A}^l\,U_A^{\dagger}=\frac{1}{2^N\overline{P}(l|e=1)} R'_{l|1}\otimes \tilde{\rho}_{A''},
\end{eqnarray}
where $\tilde{\rho}_{A''}=\Tr_{E''}\left[\bigotimes_{s=1}^{N}\proj{\xi_s}_{A_s''E_s''}\right]$. It is straightforward to observe now that $\overline{P}(l|e=1)=\Tr\left(R'_{l|1}\right)/2^N$ giving us the desired result.
\end{proof}

Now, to certify an arbitrary pure state $\ket{\psi}\in(\mathbbm{C}^2)^{\otimes N}$ among the external parties it is enough that the second Eve's reference measurement $E_1'$ is projective and one of its elements, say $R'_{0|1}$, is a projection onto $\ket{\psi}$, $R'_{0|1}=\proj{\psi}$.

For mixed states the situation is a bit more complicated. Consider a mixed state acting on $\mathbbm{C}^d$: 
\begin{equation}
 \rho=\sum_kp_k\proj{\psi_k}. 
\end{equation}
It follows that one can construct an extremal $3d-$outcome POVM in the Hilbert space $\mathbbm{C}^{2d}$ that can be performed by Eve on her share of the joint state $\ket{\psi_{AE}}$ to create $\rho$ at the external parties' labs with the aid of post-processing. To construct this POVM, we first define two sets of $d$ mutually orthogonal vectors $\mathcal{B}_1=\{\ket{\psi_k}\}$ and  $\mathcal{B}_2=\{\ket{\phi_k}\}$ such that $\bra{\psi_k}\phi_{k'}\rangle=0$ for any $k,k'$. Then, one considers the pair of the states $\{\psi_k,\phi_k\}$ to construct the trine POVM that acts on $\mathbbm{C}^2$ [cf. Ref. \cite{remik1}] as 
\begin{eqnarray}
    M_{k,1}=p_k \proj{\psi_k},\qquad M_{k,2}=\frac{2-p_k}{2}\proj{\tau_{k,2}},\qquad M_{k,3}=\frac{2-p_k}{2}\proj{\tau_{k,3}},
\end{eqnarray}
where for all $k$,
\begin{eqnarray}\label{tau}
    \ket{\tau_{k,2}}=\sqrt{\frac{1-p_k}{2-p_k}}\ket{\psi_k}+\sqrt{\frac{1}{2-p_k}}\ket{\phi_k},\qquad \ket{\tau_{k,3}}=-\sqrt{\frac{1-p_k}{2-p_k}}\ket{\psi_k}+\sqrt{\frac{1}{2-p_k}}\ket{\phi_k}.
\end{eqnarray}

Notice that for any $k$, 
\begin{equation}\label{D8}
M_{k,0}+M_{k,1}+M_{k,2}=\proj{\psi_k}+\proj{\phi_k},\end{equation}
and therefore the three-element set $\{M_{k,l}\}_{l}$ can also be understood as a three-outcome rank-one POVM in $\mathbbm{C}^2$. Moreover, taking a sum over $k$ in Eq. (D8) one obtains that
\begin{equation}
\sum_{k=0}^{d-1}(M_{k,0}+M_{k,1}+M_{k,2})=\sum_{k=0}^{d-1}(\proj{\psi_k}+\proj{\phi_k})=\mathbbm{1}_{\mathbbm{C}^{2d}}
\end{equation}
which implies that $M_{k,l}$ form a valid $3d$-outcome rank-one POVM in $\mathbbm{C}^{2d}$ which we denote $\mathcal{M}=\{M_{k,l}\}$. 

Let us now demonstrate that $\mathcal{M}$ is an extremal POVM. To this end, we first notice that for any $k$ the elements $M_{k,l}$ $(l=0,1,2)$ are linearly independent and, moreover, the sets $\{M_{k,0},M_{k,1},M_{k,2}\}$ and $\{M_{k',0},M_{k',1},M_{k',2}\}$ for any pair $k\neq k'$ are orthogonal in the sense that any element from the first set is orthogonal to any element in the second set. This means that all $M_{k,l}$ forming our POVM are linearly independent which together with the fact that $M_{k,l}$ are rank-one
implies that $\mathcal{M}$ is extremal [cf. Corollary 5 in Ref. \cite{APP05}].

Let us now go back to our scheme of probabilistic self-testing of an arbitrary mixed state $\rho$ and observe that after performing $\mathcal{M}$ on her shares of the joint state, Eve obtains the outcomes $l\equiv (k,1)$ with probability $p(k,1|e=1)p_l/2^N$. Whenever she obtains the outcome $(k,1)$, the post-measurement state with the external parties is certified from Corrollary \ref{corr1}. Now, one can clearly see that as the average state with the external parties when Eve obtains $(k,1)$ for all $k$ is certified from Corrollary \ref{corr1} to be
\begin{eqnarray}
   U_A\,\rho_{A}\,U_A^{\dagger}=\frac{1}{\sum_kp(k,1)}\sum_{k}p(k,1) U_A\,\rho_{A}^{(k,1)}\,U_A^{\dagger}=\left(\sum_{k}M_{(k,1)}\right)\otimes \tilde{\rho}_{A''}=\rho_{A'}\otimes \tilde{\rho}_{A''},
\end{eqnarray}
or its complex conjugate, which is exactly as we promised. The above state occurs with a probability $1/2^N$.

\section{Appendix D: Robustness analysis of self-testing any extremal measurement}
Let us now find the robustness of self-testing the extremal measurements. Eve's measurement corresponding to $e=0$ can be robustly self-tested to be the GHZ-like basis \eqref{statest1} using the direct generalisation of the proof in \cite{sarkarPRL}, which utilises a robustness result in \cite{sarkar2025}. Thus, we directly proceed towards robust self-testing of the measurement corresponding to Eve's input $e=1$. For this purpose, we again need to utilise the robust self-testing in the star network as stated in \cite{sarkar2025} and the relevant part for our work is stated below.
\setcounter{fakt}{0}
\begin{fakt}\label{fact1}
   Suppose that the the Bell functionals $\langle\hat{\mathcal{I}}_{l}\rangle$ \eqref{BE1Nop} attain a value $\varepsilon-$close to their maximal quantum values, that is,
\begin{eqnarray}
 \langle\psi_l|\hat{\mathcal{I}}_{l}|\psi_l\rangle\geq \beta_Q-\varepsilon\qquad \forall l.
\end{eqnarray}
along with the probability of the outcomes of the measurement of Bob being $|\overline{P}(l)-1/2^N|\leq\varepsilon$  for all $l$. 
Then, there exists unitaries $U_i:\mathcal{H}_{A_i}\rightarrow\mathcal{H}_{A_i}$
for $i=1,\ldots,N$ such that the state is certified as
\begin{eqnarray}\label{Rob1manu1}
 \dl{\ket{\tilde{\psi}_l}-\ket{\phi_l}\ket{\xi_l}}\leq \left(\delta_N+\sqrt{2(N-1)}\right)\sqrt{2\varepsilon}\leq 17(N^2-1)\sqrt{2\varepsilon}.
\end{eqnarray}
where $\ket{\tilde{\psi}_l}=\bigotimes_{i=1}^NU_{i}\ket{\psi_l}$ and $\delta_N=16(N-1)+ 2N(N-1)\left[\sqrt{2}+1+\sqrt{\frac{1}{N-1}}\right]\leq 16(N^2-1)$, while the measurements are characterised as
\begin{eqnarray}\label{roburel11}
     \dl{U_1\tilde{A}_{1,0}U_1^{\dagger}\ket{\tilde{\psi}_l}-Z\otimes\I \ket{\tilde{\psi}_l}}\leq 8\sqrt{2\varepsilon},\qquad \dl{U_1\tilde{A}_{1,1}U_1^\dagger\ket{\tilde{\psi}_l}-X\otimes\I \ket{\tilde{\psi}_l}}\leq 8\sqrt{2\varepsilon}.
\end{eqnarray}
and,
\begin{eqnarray}\label{roburel12}
    U_iA_{i,0}U_i^\dagger\ket{\tilde{\psi}_l}=Z\otimes\I\ket{\tilde{\psi}_l},\qquad\dl{U_iA_{i,1}U_i^\dagger\ket{\tilde{\psi}_l}-X\otimes\I\ket{\tilde{\psi}_l}}\leq \left[\sqrt{2}+1+\sqrt{\frac{1}{N-1}}\right]\sqrt{2\varepsilon}\leq 4\sqrt{2\varepsilon}.
\end{eqnarray}
\end{fakt}
Let us prove some additional robustness statements following from the above fact required for our purpose.

\begin{cor}
    If the conditions of Fact \eqref{fact1} is satisfied, then the states $\ket{\psi_{A_iE_i}}$ generated by the sources in the star network scenario depicted in Fig. 1 of the manuscript are robustly certified to be
    \begin{eqnarray}\label{robst1}
         \dl{\ket{\overline{\psi}_{A_iE_i}}-\ket{\phi^+_{A_i'E'_i}}\ket{\xi_{A_i''E''_i}}}\leq N^2\left(\delta_N+\sqrt{2(N-1)}\right)^2\sqrt{\varepsilon/2}\leq (17N(N^2-1))^2\sqrt{\varepsilon/2}.
    \end{eqnarray}
where $\ket{\overline{\psi}_{A_iE_i}}=U_{A_i}\otimes U_{E_i}\ket{\tilde{\psi}_{A_iE_i}}$ for all $i$. Moreover, the observable $A_{i,2}$ is also robustly certified to be
\begin{eqnarray}\label{Yrob}
    \dl{U_iA_{i,2}U_i^\dagger\ket{\tilde{\psi}_l}-Y\otimes\mathcal{K}\ket{\tilde{\psi}_l}}\leq v_N\sqrt{2\varepsilon} \leq 226 N^2\sqrt{2\varepsilon}
\end{eqnarray}
where $v_N=12\gamma_N+2(\delta_N+\sqrt{2(N-1)})$ and $\gamma_N=\sqrt{2}+\delta_N+(N-2)(\sqrt{2}+1+\sqrt{1/(N-1)})+\sqrt{2(N-1)}\leq 16 N^2+10N$ along with the operator $\mathcal{K}$ satisfying $\mathcal{K}^2\leq \I, \mathcal{K}=\mathcal{K}^{\dagger}$ and satisfying $\dl{\mathcal{K}\bigotimes_i\ket{\xi_{A''_iE''_i}}}^2\geq 1-8\gamma_N^2\varepsilon$.
\end{cor}
\begin{proof}
    Let us begin by proving the robustness of the states generated by the sources. For this, we consider Eq. \eqref{Rob1manu1} and find that
    \begin{eqnarray}
         \dl{\proj{\tilde{\psi}_l}-\proj{\phi_l}\otimes\proj{\xi_l}}_*\leq N^2\left(\delta_N+\sqrt{2(N-1)}\right)^2\sqrt{\varepsilon/2}
    \end{eqnarray}
    where $\dl{M}_*=\Tr|M|$.
    One can also look at Eqs. $(68)-(75)$ in \cite{sarkar2025} for a more rigorous proof of the above statement. Now, recalling that $\proj{\tilde{\psi}_l}$ is the post-measurement state corresponding to Eve's measurement element $R_{l|0}$ and thus using the formula \eqref{54}, we obtain
    \begin{eqnarray}
       \dl{  \frac{1}{\overline{P}(l|0)}\Tr_{E}\left[\left(\I_{A}\otimes R_l\right)\bigotimes_{i=1}^N\proj{\psi_{A_iE_i}}\right]-\proj{\phi_l}\otimes\proj{\xi_l}}_*\leq N^2\left(\delta_N+\sqrt{2(N-1)}\right)^2\sqrt{\varepsilon/2}.
    \end{eqnarray}
    Proceeding now in the same manner as Eqs. \eqref{Schmidt}-\eqref{cond4}, 
    we have from the above formula that
\begin{eqnarray}
       \dl{  \left(\bigotimes_{i=1}^NP_{A_i}U_{{i}}\right)\ R_l\ \left(\bigotimes_{i=1}^NU_{{i}}^{\dagger} P_{{A}_i}\right)-\overline{P}(l|0)\proj{\phi_l}\otimes\proj{\xi_l}}_*\leq \overline{P}(l|0)N^2\left(\delta_N+\sqrt{2(N-1)}\right)^2\sqrt{\varepsilon/2}.
    \end{eqnarray}
    Recall here that $P_{A_i}$ are the state parameters \eqref{P}. Again using triangle inequality in the above formula, 
    we have that
    \begin{eqnarray}
        \dl{  \left(\bigotimes_{i=1}^NP_{A_i}U_{{i}}\right)\ R_l\ \left(\bigotimes_{i=1}^NU_{{i}}^{\dagger} P_{{A}_i}\right)-\overline{P}(l|0)\proj{\phi_l}\otimes\proj{\xi_l}}_*\leq \overline{P}(l|0)N^2\left(\delta_N+\sqrt{2(N-1)}\right)^2\sqrt{\varepsilon/2}.
    \end{eqnarray}
   Summing over $l$ and then using triangle inequality, we obtain
   \begin{eqnarray}
        \dl{  \bigotimes_{i=1}^NP_{A_i}^2-\sum_{l=0}^{2^N-1}\overline{P}(l|0)\proj{\phi_l}\otimes\proj{\xi_l}}_*\leq N^2\left(\delta_N+\sqrt{2(N-1)}\right)^2\sqrt{\varepsilon/2}.
   \end{eqnarray}
   Now, taking a partial trace over $A_1,\ldots,A_N/A_k$ and using the fact that $\dl{\Tr_{A}M_{AB}}_*\leq \dl{M_{AB}}_*$, we obtain that
   \begin{eqnarray}
        \dl{  P_{A_k}^2-\I_{A_i'}\otimes\frac{\sigma_{A_k''}}{2} }_*\leq N^2\left(\delta_N+\sqrt{2(N-1)}\right)^2\sqrt{\varepsilon/2}
   \end{eqnarray}
   where $\sigma_{A_k''}$ for all $k$ is similar to Eq. \eqref{sigma1} as $\sigma_{A_j''}=\sum_{l=0}^{2^N-1}\overline{P}(l|0)\Tr_{A''\setminus\{A_{j}''\}}\left(\proj{\xi_l}_{A''}\right)$. Notice that $P\geq0$ and $\dl{M}\leq\dl{M}_*$ and thus, we obtain from the above formula 
   \begin{eqnarray}
        \dl{  P_{A_k}-\I_{A_i'}\otimes\sqrt{\frac{\sigma_{A_k''}}{2} }}\leq N^2\left(\delta_N+\sqrt{2(N-1)}\right)^2\sqrt{\varepsilon/2}.
   \end{eqnarray}
   Now, using the fact that $\bra{\phi_+^d}M^{\dagger}M\otimes\I\ket{\phi_d^+}=\dl{M}/d$, we obtain that
   \begin{eqnarray}
         \dl{ \sqrt{d_i} P_{A_k}\ket{\phi_{d_i}^+}-\I_{A_i'}\otimes\sqrt{\frac{d_i\sigma_{A_k''}}{2} }\ket{\phi_{d_i}^+}}\leq N^2\left(\delta_N+\sqrt{2(N-1)}\right)^2\sqrt{\varepsilon/2}
   \end{eqnarray}
   which using \eqref{genstate3} can be expressed as
   \begin{eqnarray}
        \dl{ \ket{\overline{\psi}_{A_iE_i}}-\ket{\phi^+_{A_i'E_i'}}\otimes\ket{\xi_{A_i''E_i''}}}\leq N^2\left(\delta_N+\sqrt{2(N-1)}\right)^2\sqrt{\varepsilon/2}.
   \end{eqnarray}
   To obtain $\I_{A_i'}\otimes\sqrt{\frac{d_i\sigma_{A_k''}}{2}}\ket{\phi_{d_i}^+} =\ket{\phi^+_{A_i'E_i'}}\otimes\ket{\xi_{A_i''E_i''}}$ we used Eq. \eqref{66} and then embedded the local Hilbert space of $A_i$ into an even dimensional Hilbert space of larger dimension. 
   
   Let us now find the robustness of the observables $A_{i,2}$. For this purpose, we utilise the sum of squares decomposition \eqref{SOS1}, which from  $\langle\psi_l|\hat{\mathcal{I}}_{l}|\psi_l\rangle\geq \beta_Q-\varepsilon$ for all $l$, allows us to conclude that
   \begin{eqnarray}
       \dl{Q_{i,l_i}\ket{\psi_l}}\leq \sqrt{2\varepsilon}
   \end{eqnarray}
   for all $i$ and $l_i=0,1$. Now, expanding $Q_{i,l_i}$ using Eq. \eqref{SOS2_2}, we obtain
   \begin{eqnarray}
       \dl{\left((-1)^{l_i}\I+A_{1,2}\otimes A_{i,2}\otimes\bigotimes_{\substack{j=2\\j\ne i}}^{N} A_{j,1}\right)\ket{\psi_l}}\leq \sqrt{2\varepsilon}
   \end{eqnarray}
   Furthermore, denoting $\overline{A}_{i,2}=U_iA_{i,2}U_i^{\dagger}$ and then using \eqref{Rob1manu1} and \eqref{roburel12}, we obtain
   \begin{eqnarray}
        \dl{\left((-1)^{l_i}\I+\overline{A}_{1,2}\otimes \overline{A}_{i,2}\otimes\bigotimes_{\substack{j=2\\j\ne i}}^{N} X_j\right)\ket{\phi_l}\ket{\xi_l}}\leq \sqrt{2\varepsilon}+(N-2)\left[\sqrt{2}+1+\sqrt{\frac{1}{N-1}}\right]\sqrt{2\varepsilon}+\left(\delta_N+\sqrt{2(N-1)}\right)\sqrt{2\varepsilon}.\nonumber\\
   \end{eqnarray}
   From here on, we denote $\gamma_N=\sqrt{2}+\delta_N+(N-2)(\sqrt{2}+1+\sqrt{1/(N-1)})+\sqrt{2(N-1)}$.
   As the observables $A_{i,2}$ act on even-dimensional local Hilbert spaces in the above formula, we express them here in the same way as \eqref{genmea2} 
   \begin{eqnarray}\label{174}
    \overline{A}_{i,2}=\sum_{j=0}^3\sigma_j\otimes \mathcal{K}_{i,j}
   \end{eqnarray}
   where $\sigma_0=\I, \sigma_1=Z,\sigma_2=X, \sigma_3=Y$ with $\mathcal{K}_{i,j}$ acting on the junk part of the state $\ket{\xi_l}$ and satisfies $\sum_{j=0}^3\mathcal{K}_{i,j}^2=\I$ and $\mathcal{K}_{i,j}=\mathcal{K}_{i,j}^{\dagger}$.
   From the above formula, we obtain
   \begin{eqnarray}
        \dl{\left(-\sigma_{3}\otimes \sigma_{3}\otimes\I+\sum_{j_1,j=0}\sigma_{j_1}\otimes\sigma_{j}\otimes\mathcal{K}_{1,j_1}\otimes\mathcal{K}_{i,j}\right)\otimes\bigotimes_{\substack{j=2\\j\ne i}}^{N} X_j\ket{\phi_l}\ket{\xi_l}}\leq \gamma_N\sqrt{2\varepsilon}
   \end{eqnarray}
where we used the fact that $(-1)^{l_i}\sigma_{3}\otimes \sigma_{3}\otimes\bigotimes_{\substack{j=2\\j\ne i}}^{N} X_j\ket{\phi_l}=-\ket{\phi_l}$. Expanding the above formula, we obtain
\begin{eqnarray}
    \sum_{\substack{j_1,j=0\\j= j_1\ne 3}}^3\dl{\mathcal{K}_{1,j_1}\otimes\mathcal{K}_{i,j}\ket{\xi_l}}+\dl{\I-\mathcal{K}_{1,3}\otimes\mathcal{K}_{i,3}\ket{\xi_l}}\leq \gamma_N\sqrt{2\varepsilon}.
\end{eqnarray}
This implies from the above expression that
\begin{eqnarray}\label{177}
    \bra{\xi_l}\mathcal{K}_{1,j_1}^2\otimes\mathcal{K}_{i,j}^2\ket{\xi_l}\leq2\gamma_N^2\varepsilon,\quad \forall j,j_1-\{j=j_1\ne3\}, \qquad\bra{\xi_l}(\I-\mathcal{K}_{1,3}\otimes\mathcal{K}_{i,3})^2\ket{\xi_l}\leq 2\gamma_N^2\varepsilon.
\end{eqnarray}
Now, let us observe from the left-hand expression in the above formula \eqref{177} that for any $j_1$ except $j_1= 3$
\begin{eqnarray}
\sum_{j=0}^3\bra{\xi_l}\mathcal{K}_{1,j_1}^2\otimes\mathcal{K}_{i,j}^2\ket{\xi_l}\leq8\gamma_N^2\varepsilon
\end{eqnarray}
which using the fact that $\sum_{j=0}^3\mathcal{K}_{i,j}^2=\I$, we obtain for $j_1=0,1,2$
\begin{eqnarray}
\bra{\xi_l}\mathcal{K}_{1,j_1}^2\ket{\xi_l}\leq8\gamma_N^2\varepsilon
\end{eqnarray}
which further implies that
\begin{eqnarray}
\bra{\xi_l}\mathcal{K}_{1,3}^2\ket{\xi_l}\geq1-8\gamma_N^2\varepsilon.
\end{eqnarray}
Similarly, for any other $i=2,\ldots,N$, we have that
\begin{eqnarray}\label{180}
    \bra{\xi_l}\mathcal{K}_{i,j}^2\ket{\xi_l}\leq8\gamma_N^2\varepsilon,\quad (j=0,1,2)\qquad\bra{\xi_l}\mathcal{K}_{i,3}^2\ket{\xi_l}\geq1-8\gamma_N^2\varepsilon.
\end{eqnarray}
Notice that $\ket{\xi_l}=\ket{\xi_{A''B''}}$ for any $l$ as for the ideal Eve's measurement, $\I$ acts on the junk subsystems and thus we can also conclude from the above expression that
\begin{eqnarray}\label{181}
   \bra{\xi}\mathcal{K}_{i,3}^2\ket{\xi}\geq 1-8\gamma_N^2\varepsilon.
\end{eqnarray}

Finally, let us consider the expression $\dl{(\overline{A}_{i,2}-Y\otimes \mathcal{K}_{i,3})\ket{\tilde{\psi}_l}}$ and use triangle inequality to get
\begin{eqnarray}\label{183}
    \dl{(\overline{A}_{i,2}-Y\otimes \mathcal{K}_{i,3})\ket{\tilde{\psi}_l}}\leq \dl{(\overline{A}_{i,2}-Y\otimes \mathcal{K}_{i,3})(\ket{\tilde{\psi}_l}-\ket{\phi_l}\ket{\xi_l})}+\dl{(\overline{A}_{i,2}-Y\otimes \mathcal{K}_{i,3})\ket{\phi_l}\ket{\xi_l}}.
\end{eqnarray}
The first expression in the right-hand side of the above formula can be upper bounded using Eq. \eqref{Rob1manu1}. Thus, let us now bound the quantity $\dl{(\overline{A}_{i,2}-Y\otimes \mathcal{K}_{i,3})\ket{\phi_l}\ket{\xi_l}}$ which on expanding using \eqref{174} and then using \eqref{180} gives us
\begin{eqnarray}
  \dl{(\overline{A}_{i,2}-Y\otimes \mathcal{K}_{i,3})\ket{\phi_l}\ket{\xi_l}}\leq \sum_{j=0,1,2}\dl{\mathcal{K}_{i,j}\ket{\xi_l}}\leq 12\gamma_N\sqrt{2\varepsilon}
\end{eqnarray}
which along with \eqref{183} gives us the formula \eqref{Yrob}. This completes the proof.

\end{proof}

Let us now find the robustness of the self-testing statement corresponding to Eve's input $e=1$.

\setcounter{thm}{3}
\begin{thm}\label{thTeo5}  Consider again the network scenario outlined in the main text (see Fig. 1) with Eve's measurement $E_1=\{R_{l|1}\}$ being close to the ideal one as 
       \begin{eqnarray}\label{robures2}
        \left\|\ U_E\ R_{l|1}\bigotimes_{i}^N\ket{\psi_{A_iE_i}}-\, \left(\proj{\phi_l}_{E'}\otimes\I_{E''}\right)\ \bigotimes_{i=1}^N\ket{\phi^+_{A_i'E'_i}}\ket{\xi_{A_i''E''_i}}\right\|\leq\varepsilon\qquad \forall l
    \end{eqnarray}
    where $\ket{\psi_{A_iE_i}}, \{R_{l|1}\}$ are the actual state and measurement in the experiment.
    Then the correlations \eqref{betstatfull} are close to the ideal correlations for all $l,i_1,\ldots,i_N$ as
    \begin{eqnarray}
\left|\left\langle\tilde{A}_{1,i_1}\otimes\bigotimes_{k=2}^N A_{k,i_k}\otimes (R_{l|1})_E\right\rangle_{\psi_{AE}}-f^l_{i_1,\ldots,i_N}\right|\leq(1+4N\gamma_N)\sqrt{2\varepsilon}+f_1(\varepsilon)+(N-n_3)8\sqrt{2\varepsilon}+ n_3v_N\sqrt{2\varepsilon}
\end{eqnarray}
    where $n_3$ denotes the number of indices $i_k=3$ and $f_1(\varepsilon)=\sqrt{\varepsilon/2}(17N(N^2-1))^2$. 
\end{thm}

\begin{proof}
    Let us consider the condition \eqref{robures2} and using the fact that $\mathcal{A}_{i,j}\otimes\I_{A_i''}$ are unitary, where $\mathcal{A}_{i,j}$ are the ideal observables as stated in \eqref{GHZObs}, we obtain that
    \begin{eqnarray}
        \left\|\ \tilde{\mathcal{A}}_{1,i_1}\otimes\bigotimes_{k=2}^N \mathcal{A}_{k,i_k}\otimes\Gamma_{A''}\otimes\left( U_E\ R_{l|1}\bigotimes_{i}^N\ket{\psi_{A_iE_i}}-\, \left(\proj{\phi_l}_{E'}\otimes\I_{E''}\right)\ \bigotimes_{i=1}^N\ket{\phi^+_{A_i'E'_i}}\ket{\xi_{A_i''E''_i}}\right)\right\|\leq\sqrt{2}\varepsilon\qquad \forall l
    \end{eqnarray}
    which on using Cauchy-Schwarz inequality, gives us
    \begin{eqnarray}\label{158}
    \left|\bra{\phi^+_{A'E'}}\bra{\xi_{A''E''}}\tilde{\mathcal{A}}_{1,i_1}\otimes\bigotimes_{k=2}^N \mathcal{A}_{k,i_k}\otimes\Gamma_{A''}\otimes (R_{l|1})_E\ket{\psi_{AE}}-f^l_{i_1,\ldots,i_N}\beta\right|\leq\sqrt{2}\varepsilon
    \end{eqnarray}
    where $\bra{\phi^+_{A'E'}}=\bigotimes_{i=1}^N\bra{\phi^+_{A'_iE'_i}},\bra{\xi_{A''E''}}=\bigotimes_{i=1}^N\bra{\xi_{A''E''}}$ and we used the fact that 
    \begin{eqnarray}
f^l_{i_1,\ldots,i_N}= \bra{\phi^+_{A'E'}}\left(\tilde{\mathcal{A}}_{1,i_1}\otimes\bigotimes_{k=2}^N \mathcal{A}_{k,i_k}\otimes\proj{\phi_l}_{E'}\right)\ket{\phi^+_{A'E'}},\quad \beta=\bra{\xi_{A''E''}}\Gamma_{A''}^2\otimes\I\ket{\xi_{A''E''}}.
    \end{eqnarray}
   Here $\Gamma_{A''}=\bigotimes_k\mathcal{K}_{k,i_k}$ with $\mathcal{K}_{k,i_k}=\I$ for $i_k=0,1,2$ and $\mathcal{K}_{k,3}$ is the same operator as stated below Eq. \eqref{Yrob}. Using triangle inequality and then Eq. \eqref{Yrob} and also the fact that $|f^l_{i_1,\ldots,i_N}|\leq 1$, we obtain from \eqref{158}
    \begin{eqnarray}
\left|\bra{\phi^+_{A'E'}}\bra{\xi_{A''E''}}\tilde{\mathcal{A}}_{1,i_1}\otimes\bigotimes_{k=2}^N \mathcal{A}_{k,i_k}\otimes\Gamma_{A''}\otimes (R_{l|1})_E\ket{\psi_{AE}}-f^l_{i_1,\ldots,i_N}\right|\leq\sqrt{2}\varepsilon+|f^l_{i_1,\ldots,i_N}(1-\beta)| \leq (1+4N\gamma_N)\sqrt{2\varepsilon}.
    \end{eqnarray}
    For simplicity, we used $\varepsilon\leq\sqrt{\varepsilon}$.
    Now using triangle inequality, we obtain from \eqref{158} that
    \begin{eqnarray}
\left|\bra{\psi_{AE}}\tilde{\mathcal{A}}_{1,i_1}\otimes\bigotimes_{k=2}^N \mathcal{A}_{k,i_k}\otimes (R_{l|1})_E\ket{\psi_{AE}}-f^l_{i_1,\ldots,i_N}\right|\leq\Big|\ (\bra{\phi^+_{A'E'}}\bra{\xi_{A''E''}}-\bra{\psi_{AE}})\tilde{\mathcal{A}}_{1,i_1}\otimes\bigotimes_{k=2}^N \mathcal{A}_{k,i_k}\otimes (R_{l|1})_E\ket{\psi_{AE}}\ \Big|\nonumber\\+(1+4N\gamma_N)\sqrt{2\varepsilon}.\ \ \ 
    \end{eqnarray}
    Upper bounding the term on the right-hand side using Cauchy-Schwarz inequality along with the fact that $R_{l|1}^{\dagger}R_{l|1}\leq\I$ gives us
    \begin{eqnarray}\label{161}
    \left|\bra{\psi_{AE}}\tilde{\mathcal{A}}_{1,i_1}\otimes\bigotimes_{k=2}^N \mathcal{A}_{k,i_k}\otimes (R_{l|1})_E\ket{\psi_{AE}}-f^l_{i_1,\ldots,i_N}\right|\leq(1+4N\gamma_N)\sqrt{2\varepsilon}+\left\|\bra{\phi^+_{A'E'}}\bra{\xi_{A''E''}}-\bra{\psi_{AE}}\right\|\leq \sqrt{2\varepsilon}+f_1(\varepsilon).\nonumber\\
    \end{eqnarray}
    Again using triangle inequality in the left-hand-side of the above expression \eqref{161}, we express it as
    \begin{eqnarray}
        \left|\bra{\psi_{AE}}\tilde{A}_{1,i_1}\otimes\bigotimes_{k=2}^N \mathcal{A}_{k,i_k}\otimes (R_{l|1})_E\ket{\psi_{AE}}-f^l_{i_1,\ldots,i_N}\right|\leq \left|\bra{\psi_{AE}}(\tilde{\mathcal{A}}_{1,i_1}-\tilde{A}_{1,i_1})\otimes\bigotimes_{k=2}^N \mathcal{A}_{k,i_k}\otimes (R_{l|1})_E\ket{\psi_{AE}}\right|\nonumber\\+(1+4N\gamma_N)\sqrt{2\varepsilon}+f_1(\varepsilon).
    \end{eqnarray}
    Using Cauchy-Schwarz inequality to upper bound the first term on the right-hand side, we obtain 
    \begin{eqnarray}\label{195}
        \left|\bra{\psi_{AE}}\tilde{A}_{1,i_1}\otimes\bigotimes_{k=2}^N \mathcal{A}_{k,i_k}\otimes (R_{l|1})_E\ket{\psi_{AE}}-f^l_{i_1,\ldots,i_N}\right|\leq (1+4N\gamma_N)\sqrt{2\varepsilon}+f_1(\varepsilon)+\left\|(\tilde{\mathcal{A}}_{1,i_1}-\tilde{A}_{1,i_1})\ket{\psi_{AE}}\right\|
    \end{eqnarray}
    Let us now obtain an upper bound on $\left\|(\tilde{\mathcal{A}}_{1,i_1}-\tilde{A}_{1,i_1})\ket{\psi_{AE}}\right\|$. For this, we consider Eqs. \eqref{roburel11}, \eqref{roburel12} and \eqref{Yrob} and express them using the post-measurement state expression \eqref{54} as
    \begin{eqnarray}
        \Tr\left[(\tilde{\mathcal{A}}_{1,i_1}-\tilde{A}_{1,i_1}))R_{l|0}\proj{\psi_{AE}}\right]\leq\overline{P}(l|0) f_2(\varepsilon)
    \end{eqnarray}
    Summing over all $l$ in the above expression and then using triangle inequality, we obtain
    \begin{eqnarray}
       \left\|(\tilde{\mathcal{A}}_{1,i_1}-\tilde{A}_{1,i_1})\ket{\psi_{AE}}\right\|= \Tr\left[(\tilde{\mathcal{A}}_{1,i_1}-\tilde{A}_{1,i_1}))\proj{\psi_{AE}}\right]\leq f_2(\varepsilon).
    \end{eqnarray}
    Thus, from Eq. \eqref{195} we have that
    \begin{eqnarray}
        \left|\bra{\psi_{AE}}\tilde{A}_{1,i_1}\otimes\bigotimes_{k=2}^N \mathcal{A}_{k,i_k}\otimes (R_{l|1})_E\ket{\psi_{AE}}-f^l_{i_1,\ldots,i_N}\right| \leq (1+4N\gamma_N)\sqrt{2\varepsilon}+f_1(\varepsilon)+f_2(\varepsilon).
    \end{eqnarray}
    Continuing in a similar manner for every observable $A_{i,j}$, we obtain
    \begin{eqnarray}
        \left|\bra{\psi_{AE}}\tilde{A}_{1,i_1}\otimes\bigotimes_{k=2}^N A_{k,i_k}\otimes (R_{l|1})_E\ket{\psi_{AE}}-f^l_{i_1,\ldots,i_N}\right|\leq (1+4N\gamma_N)\sqrt{2\varepsilon}+f_1(\varepsilon)+(N-n_3)8\sqrt{2\varepsilon}+ n_3v_N\sqrt{2\varepsilon}
    \end{eqnarray}
    where $n_3$ denotes the number of indices $i_k=3$ and $f_1(\varepsilon)=\sqrt{\varepsilon/2}(17N(N^2-1))^2$. This completes the proof.
\end{proof}

\end{document}